\newcommand{\atodoin}[1]{\todo[linecolor=red,backgroundcolor=green!25,bordercolor=red,inline]{\textbf{AF:~}#1}}
\declaretheorem[name=Lemma]{lemma}
\definecolor{Darkblue}{rgb}{0,0,0.4}
\definecolor{Brown}{cmyk}{0,0.61,1.,0.60}
\definecolor{Purple}{cmyk}{0.45,0.86,0,0}
\definecolor{Darkblue}{rgb}{0,0,0.4}
\newtheorem{theorem}{Theorem}
\newtheorem{corollary}{Corollary}
\newtheorem{claim}{Claim}
\newtheorem{definition}{Definition}
\newtheorem*{hisnote*}{Historical note}
\newtheorem{observation}{Observation}
\newcommand{\namedref}[2]{\hyperref[#2]{#1~\ref*{#2}}}
\newcommand{\lineref}[1]{\namedref{Line}{#1}}
\newcommand{\supp}{{\rm supp}}
\newcommand{\N}{\mathbb{N}}
\newcommand{\R}{\mathbb{R}}
\newcommand{\diam}{{\rm diam}}
\newcommand{\poly}{{\rm poly}}
\newcommand{\est}{{\rm est}}
\newcommand{\mst}{{\rm MST}}
\newcommand{\Texp}{\mathsf{Texp}}
\newcommand{\etal}{{\em et al. \xspace}}
\newcommand{\ddim}{{\rm ddim}}
\newcommand{\eps}{\epsilon}
\newcommand{\rhoR}{\rho_{\phantom{.}_{\hspace{-3pt}\mathcal{R}}}}
\title{On Strong Diameter Padded Decompositions\thanks{This paper is a full version of the proceedings version \cite{Fil19} published in APPROX 2019. In addition to previously published material, this version contains full details of the core clustering in the cops and robbers algorithm (which somewhat simplifies over \cite{AGGNT19}). In addition, there is a new \Cref{sec:generalGraphs} on general graphs.}}
\author{Arnold Filtser
	\thanks{The reaserch is supported in part by ISF grants No. (1817/17) and (1042/22), by BSF grant No. 2015813, and by the Simons Foundation.}
	\\Bar-Ilan University\\
	Email: \texttt{arnold.filtser@biu.ac.il} }
\date{}
\begin{document}
	\maketitle
	\thispagestyle{empty}
	\nonumber
\begin{abstract}
Given a weighted graph $G=(V,E,w)$, a partition of $V$ is $\Delta$-bounded if the diameter of each cluster is bounded by $\Delta$.
A distribution over $\Delta$-bounded partitions is a $\beta$-padded decomposition if every ball of radius $\gamma\Delta$ is contained in a single cluster with probability at least $e^{-\beta\cdot\gamma}$.
The weak diameter of a cluster $C$ is measured w.r.t. distances in $G$, while the strong diameter is measured w.r.t. distances in the induced graph $G[C]$. The decomposition is weak/strong according to the diameter guarantee.
	
Formerly, it was proven that $K_r$ minor free graphs admit weak decompositions with padding parameter $O(r)$, while for strong decompositions only $O(r^2)$ padding parameter was known.
Furthermore, for the case of a graph $G$, for which the induced shortest path metric $d_G$ has doubling dimension $\ddim$, a weak $O(\ddim)$-padded decomposition was constructed, which is also known to be tight. For the case of strong diameter, nothing was known.

We construct strong $O(r)$-padded decompositions for $K_r$ minor free graphs, matching the state of the art for weak decompositions. Similarly, for graphs with doubling dimension $\ddim$ we construct a strong $O(\ddim)$-padded decomposition, which is also tight. 
We use this decomposition to construct strong $\left(O(\ddim),\tilde{O}(\ddim)\right)$ sparse cover scheme for such graphs.
Our new decompositions and cover have implications to approximating unique games, the construction of light and sparse spanners, and for path reporting distance oracles. 
\end{abstract}

\newpage
\tableofcontents

\newpage
\pagenumbering{arabic}

\section{Introduction}
Divide and conquer is a widely used algorithmic approach. In many  distance related graph problems, it is often useful to randomly partition the vertices into clusters, such that small neighborhoods have high probability of being clustered together.
Given a weighed graph $G=(V,E,w)$, a partitions is $\Delta$-\emph{bounded} if the diameter of every cluster is at most $\Delta$. A distribution $\mathcal{D}$ over partitions is called a $(\beta,\delta,\Delta)$-\emph{padded decomposition}, if every partition is $\Delta$-bounded, and for every vertex $v\in V$ and $\gamma\in[0,\delta]$, the probability that the entire ball $B_G(v,\gamma\Delta)$ of radius $\gamma\Delta$ around $v$ is clustered together, is at least $e^{-\beta\gamma}$ (that is $\Pr[B_G(v,\gamma\Delta)\subseteq P(v)] \ge e^{-\beta\gamma}$, here $P(v)$ is the cluster containing $v$).
If $G$ admits a $(\beta,\delta,\Delta)$-padded decomposition for every $\Delta>0$,
we say that $G$ admits $(\beta,\delta)$-padded decomposition scheme. If in addition $\delta=\Omega(1)$ is a universal constant, we say that $G$ is $\beta$-decomposable. 

A relaxed notion is that of \emph{probabilistic decomposition}, where the guarantee is over pairs rather than balls.
A  distribution $\mathcal{D}$ over partitions is called a $(\beta,\Delta)$-\emph{probabilistic decomposition}, if every partition is $\Delta$-bounded, and for every pair of vertices vertex $u,v\in V$,
$\Pr[P(u)\ne P(v)]\le \beta\cdot\frac{d_G(u,v)}{\Delta}$.
Another relaxation is called \emph{threshold probabilistic decomposition} (abbreviated ThProbabilistic),\footnote{This type of decomposition was also previously called stochastic decomposition \cite{FN22}.} where the success probability is fixed and does not goes to $1$ as the distance between $u$ and $v$ goes to $0$. Formally, a  distribution $\mathcal{D}$ over partitions is called a $(\beta,p,\Delta)$-\emph{ThProbabilistic decomposition}, if every partition is $\Delta$-bounded, and for every pair of vertices vertex $u,v\in V$ at distance at most $\frac\Delta\beta$, $\Pr[P(u)= P(v)]\ge p$.
Similarly to padded decompositions, if $G$ admits a $(\beta,\Delta)$-probabilistic/$(\beta,p,\Delta)$-ThProbabilistic decomposition for every $\Delta>0$, we say that $G$ admits $\beta$-probabilistic/$(\beta,p)$-ThProbabilistic scheme.

\begin{table}[H]
	\begin{tabular}{|l|l|l|}
		\hline
		\textbf{Stochastic Decomposition type} & \textbf{For}                                                                                               & \textbf{Guarantee}                                            \\ \hline
		$(\beta,\delta,\Delta)$-Padded         & $\gamma\in[0,\delta]$, $v\in V$                                                                            & $\Pr[B_G(v,\gamma\Delta)\subseteq P(v)] \ge e^{-\beta\gamma}$ \\ \hline
		$(\beta,\Delta)$-Probabilistic         & $u,v\in V$                                                                                                      & $\Pr[P(u)\ne P(v)]\le \beta\cdot\frac{d_G(u,v)}{\Delta}$      \\ \hline
		$(\beta,p,\Delta)$-ThProbabilistic     &$u,v\in V$, $d_G(u,v)\le \frac{\Delta}{\beta}$ & $\Pr[P(u)= P(v)]\ge p$                                     \\ \hline
\end{tabular}
\end{table}

We will refer to all the three definitions as stochastic decompositions.
Among other applications, stochastic decompositions have been used for multi-commodity flow \cite{KPR93,LR99}, metric embeddings \cite{Bar96,Rao99,Rab03,KLMN04,FRT04,LN05,ABN11,ACEFN20,FL21,Fil22,BFT23}, spanners \cite{HIS13,FN22,HMO21}, edge and vertex cut problems \cite{Mat02,FHL08}, 
distance oracles and routing \cite{AGGM06,MN07,ACEFN20,FL21,Fil22}, near linear SDD solvers \cite{BGKMPT14}, approximation algorithms \cite{CKR04}, spectral methods \cite{KLPT09,BLR10}, and many more.

The \emph{weak} diameter of a cluster $C\subseteq V$ is the maximal distance between a pair of vertices in the cluster w.r.t. the shortest path metric in the entire graph $G$, i.e.  $\max_{u,v\in C}d_G(u,v)$. The \emph{strong} diameter is the maximal distance  w.r.t.~the shortest path metric in the induced graph $G[C]$, i.e. $\max_{u,v\in C}d_{G[C]}(u,v)$.
Stochastic decomposition can be weak/strong according to the provided guarantee on the diameter of each cluster.
It is considerably harder to construct decompositions with strong diameter. Nevertheless, strong diameter is more convenient to use, and some applications indeed require that (e.g. for routing, spanners e.t.c.).

Previous results on stochastic decompositions are presented in \Cref{tab:results}. 
In a seminal work, Klein, Plotkin and Rao \cite{KPR93} showed that every $K_r$ minor free graph admits a weak $\left(O(r^3),\Omega(1)\right)$-padded decomposition scheme.
Fakcharoenphol and Talwar \cite{FT03} improved the padding parameter of $K_r$ minor free graphs to $O(r^2)$ (weak diameter).
Finally, Abraham, Gavoille, Gupta, Neiman, and Talwar \cite{AGGNT19} improved the padding parameter to $O(r)$, still with weak diameter.
The first result on strong diameter for $K_r$ minor free graphs is by Abraham, Gavoille, Malkhi, and Wieder \cite{AGMW10}, who showed that $K_r$ minor free graph admit a strong $2^{-O(r)}$-probabilistic decomposition scheme.
Afterwards, Abraham \etal \cite{AGGNT19} (the same paper providing the state of the art for weak diameter), proved that $K_r$ minor free graphs admit strong $(O(r^2),\Omega(\frac{1}{r^2}))$-padded decomposition scheme.
It was conjectured by \cite{AGGNT19} that $K_r$ minor free graphs admit padded decomposition scheme with padding parameter $O(\log r)$. However, even improving the padding parameter for the much more basic case of small treewidth graphs remains elusive. 

Another family of interest are graphs with bounded doubling dimension\footnote{A metric space $(X, d)$ has doubling dimension $\ddim$ if every ball of radius $2r$ can be 	covered by $2^{\ddim}$ balls of radius $r$. The doubling dimension of a graph is the doubling dimension of its induced shortest path metric.}.
Abraham, Bartal and Neiman \cite{ABN11} showed that a graph with doubling dimension $\ddim$ is weakly $O(\ddim)$-decomposable, generalizing a result from \cite{GKL03}.
No prior strong diameter decomposition for this family is known.

General $n$-vertex graphs admit strong $O(\log n)$-probabilistic decomposition scheme as was shown by Bartal \cite{Bar96}. In fact, the same proof can also be used to show that general graphs admit strong $\left(O(\log n),\Omega(1)\right)$-padded decomposition scheme. Both results are asymptotically tight. Stated alternatively, there is some constant $c$, such that for every $\Delta>0$, and $k\ge 1$, every $n$-vertex graph admits a distribution over $\Delta$ bounded partitions, such that every pair of vertices at distance at most $\frac{\Delta}{ck}$ will be clustered together with probability at least $n^{-\frac1k}$. 
As the padding parameter governs the exponent in the success probability, it is important to optimize the constant $c$. 
It follows implicitly from the work of Awerbuch and Peleg \cite{AP90} that for $k\in\N$, general $n$-vertex graphs admit a strong $(4k-2,O(\frac1k\cdot n^{-\frac1k}))$-ThProbabilistic decomposition scheme. See the introduction to \Cref{sec:generalGraphs} for additional details.

A related notion to padded decompositions is \emph{sparse cover}.
A collection $\mathcal{C}$ of clusters is a weak/strong $(\beta,s,\Delta)$ sparse cover if it is weakly/strongly $\Delta$-bounded, each ball of radius $\frac\Delta\beta$ is contained in some cluster, and each vertex belongs to at most $s$ different clusters. 
A graph admits weak/strong $(\beta,s)$ sparse cover scheme if it admits weak/strong $(\beta,s,\Delta)$ sparse cover for every $\Delta>0$.
Awerbuch and Peleg \cite{AP90} showed that for $k\in\N$, general $n$-vertex graphs admit a strong $(4k-2,2k\cdot n^{\frac1k})$ sparse cover scheme. 
For $K_r$ minor free graphs, Abraham et al. \cite{AGMW10} constructed strong $(O(r^2),2^r(r+1)!)$ sparse cover scheme. Busch, LaFortune and Tirthapura \cite{BLT14} constructed strong $(4,f(r)\cdot \log n)$ sparse cover scheme for $K_r$ minor free graphs.\footnote{$f(r)$ is a function coming from the Robertson and Seymour structure theorem \cite{RS03}.}

For the case of graphs with doubling dimension $\ddim$, Abraham et al. \cite{AGGM06} constructed a strong $(4,4^{\ddim})$ sparse cover scheme.
No other tradeoffs are known. In particular, if $\ddim$ is larger than $\log\log n$, the only way to get a sparse cover where each vertex belongs to $O(\log n)$ clusters is through \cite{AP90}, with only $O(\log n)$ padding.
 
\subsection{Results and Organization}
In our first result (\Cref{thm:StrongMinorFree} in \Cref{sec:MinorFree}), we prove that $K_r$ minor free graphs are strongly $(O(r),\Omega(\frac{1}{r}))$-decomposable. Providing quadratic improvement compared to \cite{AGGNT19}, and closing the gap between weak and strong padded decompositions on minor free graphs.

Our second result (\Cref{cor:PaddedDoubling} in \Cref{sec:doubling}) is the first strong diameter padded decompositions for doubling graphs, which is also asymptotically tight. Specifically, we prove that graphs with doubling dimension $\ddim$ are strongly $O(\ddim)$-decomposable. 

Both of these padded decomposition constructions are based on a technical theorem (\Cref{thm:padded} in \Cref{sec:centersToPadded}). Given a set of centers $N$, such that each vertex has a center at distance $\le\Delta$, and at most $\tau$ centers at distance $\le 3\Delta$ ($\forall v,~|B_G(v,3\Delta)\cap N|\le \tau$), we construct a strong $(O(\log\tau),\Omega(1),4\Delta)$-padded decomposition.
All of our decompositions can be efficiently constructed in polynomial time.

Our third contribution is in \Cref{sec:generalGraphs}, and it is for general graphs. Here we construct strong $(2k, \Omega(n^{-\frac{1}{k-1}}))$-ThProbabilistic decomposition scheme (\Cref{thm:PartitionStrongMetrics}),
and weak $(2k, n^{-\frac{1}{k}})$-ThProbabilistic decomposition scheme (\Cref{thm:PartitionWeakMetrics}).
This bounds are close to being tight, assuming Erd\H{o}s girth conjecture \cite{Erdos64}, as for $t<2k+1$, if every $n$ point metric space admits a weak $(t,p)$-ThProbabilistic decomposition scheme, then $p=\tilde{O}(n^{-\frac{1}{k}})$ (\Cref{thm:PartitionMetricsLB}).
Note that this implies that the success probability in our decompositions cannot be improved. However, it might be possible to improve the stretch parameter to $2k-1$ without changing the success probability.
See \Cref{tab:results} for a summery of results on stochastic decompositions.

\begin{table}[t]\center
	\begin{tabular}{|l|l|l|l|l|}
		\hline
		\textbf{Partition type} & \textbf{Diameter} & \textbf{Padding/stretch} & \textbf{~$\delta$ or $p$}       & \textbf{Ref/Notes}\\ \hline
		\multicolumn{5}{|c|}{\textbf{$K_r$ minor free}} \\ \hline
		Padding& Weak            & $O(r^3)$         & $\Omega(1)$        		& \cite{KPR93} \\ \hline
		Padding& Weak            & $O(r^2)$         & $\Omega(1)$       		& \cite{FT03}   \\ \hline
		Padding& Weak            & $O(r)$           & $\Omega(1)$        		& \cite{AGGNT19}   \\ \hline	
		Probabilistic& Strong       & $\exp(r)$        & n/a	         		& \cite{AGMW10}   \\ \hline	
		Padding& Strong          & $O(r^2)$         & $\Omega(\frac{1}{r^2})$   &\cite{AGGNT19}   \\ \hline	
		\rowcolor[HTML]{EFEFEF}  Padding& Strong          & $O(r)$           & $\Omega(\frac{1}{r})$     &  \Cref{thm:StrongMinorFree}\\ \hline
		
		\multicolumn{5}{|c|}{\textbf{Graphs with doubling dimension $\ddim$}} \\ \hline	
		Padding& Weak            & $O(\ddim)$           & $\Omega(1)$           &  \cite{GKL03,ABN11} \\\hline
		\rowcolor[HTML]{EFEFEF} Padding& Strong          & $O(\ddim)$           & $\Omega(1)$           & \Cref{cor:PaddedDoubling} \\ \hline
		
		\multicolumn{5}{|c|}{\textbf{General $n$-vertex graphs}} \\ \hline
		ThProbabilistic, $k\in\N$		& Strong          	& $4k-2$          & $\Omega(\frac{1}{k}\cdot n^{-\frac1k})$           & \cite{AP90}, implicit \\ \hline 
		Padded & Strong          & $O(\log n)$          & $\Omega(1)$	          & \cite{Bar96} \\ \hline 
		\rowcolor[HTML]{EFEFEF} ThProbabilistic, $k\in\R_{\ge1}$	& Strong          	& $2k$            & $\Omega(n^{-\frac{1}{k-1}})$           & \Cref{thm:PartitionStrongMetrics}\\ \hline 
		\rowcolor[HTML]{EFEFEF} ThProbabilistic, $k\in\N$		& Weak          	& $2k$        & $n^{-\frac1k}$           & \Cref{thm:PartitionWeakMetrics} \\ \hline 
	\end{tabular}
	\caption{Summery of all known and new results on stochastic decomposition schemes for various graph families. The $\delta$ or $p$ column represents the $\delta$ parameter in stochastic decompositions, and the success probability ($p$) in ThProbabilistic decompositions (it is not applicable in probabilistic decompositions).
	}\label{tab:results}
\end{table}


Our fourth result (\Cref{thm:DdimCover} in \Cref{sec:doubling}) is a strong sparse cover for doubling graphs. For every parameter $t\ge 1$, we construct a strong $\left(O(t),O(2^{\nicefrac{\ddim}{t}}\cdot\ddim\cdot\log t)\right)$ sparse cover scheme. Note that for $t=1$ we (asymptotically) obtain the result of \cite{AGMW10}. However, we also get the entire spectrum of padding parameters. In particular, for $t=\ddim$ we get a strong $\left(O(\ddim),\tilde{O}(\ddim)\right)$ sparse cover scheme. 
 
Next, we overview some of the previously known applications of strong diameter stochastic decomposition, and analyze the various improvements achieved using our results. Specifically:

\begin{enumerate}
	\item Given an instance of the unique games problem where the input graph is $K_r$ minor free, 
	Alev and Lau \cite{AL17} showed that if there exist an assignment that satisfies all but an $\eps$-fraction of the edges, then there is an efficient algorithm that finds an assignment that satisfies all but an $O(r\cdot\sqrt{\eps})$-fraction. Using our padded decompositions for minor-free graphs, we can find an assignment that satisfies all but an $O(\sqrt{r\cdot\eps})$-fraction of the edges. See \Cref{subsec:UG}.
	
	\item Using the framework of Filtser and Neiman \cite{FN22}, given an $n$ vertex graph, with doubling dimension $\ddim$, for every parameter $t>1$ we construct a graph-spanner with stretch $O(t)$, lightness $O(2^{\frac{\ddim}{t}}\cdot t\cdot\log^2n)$ and $O(n\cdot 2^{\frac{\ddim}{t}}\cdot\log n\cdot \log \Lambda)$ edges\footnote{Lightness is the ratio between the weight of the spanner to the weight of the MST.\\ ${\Lambda=\nicefrac{\max_{u,v\in V}d_G(u,v)}{\min_{u,v\in V}d_G(u,v)}}$ is the aspect ratio.\label{foot:aspect}}. 
	The only previous spanner of this type appeared in \cite{FN22}, was based on weak diameter decompositions, had the same stretch and lightness, while having no bound whatsoever on the number of edges. See \Cref{subsec:Spanner}.
	
	\item Elkin, Neiman and Wulff-Nilsen \cite{ENW16} constructed a path reporting distance oracle for $K_r$ minor free graphs with stretch $O(r^2)$, space $O(n\cdot\log\Lambda\cdot\log n)$ and query time $O(\log\log\Lambda)$. That is, on a query $\{u,v\}$ the distance oracle returns a $u-v$ path $P$ of weight at most $O(r^2)\cdot d_G(u,v)$ in $O(|P|+\log\log\Lambda)$ time.
	Using our strong diameter padded decompositions we improve the stretch to $O(r)$, while keeping the other parameters intact. See \Cref{subsec:DO}.
	
	\item We further use the framework of \cite{ENW16} to create a path reporting distance oracle for  graphs having doubling dimension $\ddim$ with stretch $O(\ddim)$, space $O(n\cdot\ddim\log\Lambda)$ and query time $O(\log\log\Lambda)$. This is the first path reporting distance oracle for doubling graphs. The construction uses our strong sparse covers. See \Cref{subsec:DO}.
\end{enumerate}

\subsection{Related Work}

Miller et al. \cite{MPX13} constructed strong diameter partitions for general graphs, which they later used to construct spanners and hop-sets in parallel and distributed regimes (see also \cite{EN19}). 
Hierarchical partitions with strong diameter had been studied and used for constructing distributions over spanning trees with small expected distortion \cite{EEST08,AN19}, Ramsey spanning trees \cite{ACEFN20}, spanning clan embeddings \cite{FL21}, and for spanning universal Steiner trees \cite{BDRRS12}. 
Another type of diameter guarantee appearing in the literature is when we require only weak diameter, and in addition for each cluster to be connected \cite{EGKRTT14,FKT19,Fil20}.

Related type of partitions are sparse partitions \cite{JLNRS05} which is a single partition into bounded diameter clusters such that every small ball intersect small number of clusters (in the worst case), and scattering partitions \cite{Fil20} where the guarantee is that every shortest path intersect small number of clusters (worst case).

Stochastic decompositions were studied for additional graph families. 
Abraham \etal \cite{AGGNT19} showed that pathwidth $r$ graphs admit strong $\left(O(\log r),\Omega(1)\right)$-padded decomposition scheme, implying that treewidth $r$ graphs admit strong $\left(O(\log r+\log\log n),\Omega(1)\right)$-padded decomposition scheme (see also \cite{KK17}). In addition, \cite{AGGNT19} showed that treewidth $r$ graphs admit strong  $(O(r),\Omega(\frac1r))$-padded decomposition scheme. 
Finally \cite{AGGNT19} proved that genus $g$ graphs admit strong $O(\log g)$-padded decomposition scheme, improving a previous weak diameter version of Lee and Sidiropoulos \cite{LS10}.

\subsection{Follow up work}
Following the techniques developed in this paper, in a companion paper \cite{Fil20} we constructed strong diameter sparse partition. Recently, this techniques were pushed further to construct hierarchical strong sparse partition \cite{BCFHHR23}. This finally led to a poly-logarithmic stretch solution for the universal Steiner tree problem (UST).
Sparse partitions were also recently used to solve the facility locations problem in high dimension in the streaming model \cite{CFJKVY23}.

Our sparse covers for doubling metrics (\Cref{thm:DdimCover}), and stochastic decompositions for genral metrics (\Cref{thm:PartitionWeakMetrics}) have been recently used to construct ultrametric covers, which in turn were used to construct reliable spanners 
\cite{FL22,Fil23,FGN23}.


\subsection{Technical Ideas}
The basic approach for creating padded decompositions is by ball carving \cite{Bar96,ABN11}. That is, iteratively create clusters by taking a ball centered around some vertex, with radius drawn according to exponential distribution. The process halts when all the vertices are clustered. 
Intuitively, if every vertex might join the cluster associated with at most $\tau$ centers, the padding parameter is $O(\log \tau)$. We think of these centers as \emph{threateners}. This approach worked very well for general graphs as the number of vertices is $n$. Similarly it also been used for doubling graphs \cite{BFT23}, where the number of threateners is bounded by $2^{O(\ddim)}$. However, in doubling graphs ball carving produces only weak diameter clustering.

Our main technical contribution is a proof that the intuition above holds for strong diameter as well. Specifically, we show that if there is a set $N$ of centers such that each vertex has a center at distance at most $\Delta$, and at most $\tau$ centers at distance $3\Delta$ (these are the threateners), then the graph admits a strong $\left(O(\log\tau),\Omega(1),4\Delta\right)$-padded decomposition scheme. We use the clustering approach of Miller \etal \cite{MPX13} with exponentially distributed starting times. 
In short, in \cite{MPX13} clustering, each center $x$ samples a starting time $\delta_x$. A vertex $v$ joins the cluster of the center $x_i$ maximizing $\delta_x-d_G(x,v)$. This approach guaranteed to creates strong diameter clusters. The key observation is that if for every center $y\ne x_i$, $\left(\delta_{x_i}-d_G(x_i,v)\right)-\left(\delta_y-d_G(y,v)\right)\ge 2\gamma \Delta$, then the ball $B_G(v,\gamma\Delta)$ is fully contained in the cluster of $x_i$. Using truncated exponential distribution, we lower bound the probability of this event by $e^{-\gamma\cdot O(\log \tau)}$. It is the first time \cite{MPX13}-like algorithm is used to create padded decompositions.
%
The \cite{MPX13} algorithm can be efficiently implemented in distributed and parallel settings. Moreover, as each vertex depends only on centers in its local area, we are able to use the  Lov\'asz Local Lemma to create a sparse cover from padded decompositions.

Decompositions of $K_r$ minor free graphs did not use ball carving directly. Rather, they tend to use the topological structure of the graph. We say that a cluster of $G$ has an $r$-core with radius $\Delta$ if it contains at most $r$ shortest paths (w.r.t. $d_G$) such that each vertex is at distance at most $\Delta$ from one of these paths.
\cite{AGGNT19}'s strong decomposition for $K_r$ minor free graphs is based on a partition into  $1$-core clusters, such that a ball with radius $\gamma\Delta$ is cut with probability at most $1-e^{-O(\gamma r^2)}$. This partition is the reason for their $O(r^2)$ padding parameter. Although not stated explicitly, \cite{AGGNT19} also constructed a partition into $r$-core clusters, such that a ball with radius $\gamma\Delta$ is cut with probability at most $1-e^{-O(\gamma r)}$. Apparently, \cite{AGGNT19} lacked an algorithm for partitioning $r$-clusters.
Taking a union of $\Delta$-nets from each shortest path to the center set $N$, it will follow that each vertex has at most $O(r)$ centers in its $O(\Delta)$ neighborhood. In particular, our theorem above implies a clustering of each $r$-core cluster into bounded diameter clusters. Our strong decomposition with parameter $O(r)$ follows.

We include a new and full proof for the $r$-core clusters. See \Cref{subsec:coreClustering} and the discussion therein.
\section{Preliminaries}

\paragraph{Graphs.}
We consider connected undirected graphs $G=(V,E)$ with edge weights
$w: E \to \R_{\ge 0}$. We say that vertices $v,u$ are neighbors if $\{v,u\}\in E$. Let $d_{G}$ denote the shortest path metric in
$G$.
$B_G(v,r)=\{u\in V\mid d_G(v,u)\le r\}$ is the closed ball of radius $r$ around $v$. For a vertex $v\in
V$ and a subset $A\subseteq V$, let $d_{G}(x,A):=\min_{a\in A}d_G(x,a)$,
where $d_{G}(x,\emptyset)= \infty$. For a subset of vertices
$A\subseteq V$, $G[A]$ denotes the induced graph on $A$,
and $G\setminus A := G[V\setminus A]$.

\begin{wrapfigure}{r}{0.2\textwidth}
	\begin{center}
		\vspace{-20pt}
		\includegraphics[width=0.18\textwidth]{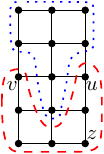}
		\vspace{-7pt}
	\end{center}
	\vspace{-10pt}
\end{wrapfigure}
The \emph{diameter} of a graph $G$ is $\diam(G)=\max_{v,u\in V}d_G(v,u)$, i.e. the maximal distance between a pair of vertices.
Given a subset $A\subseteq V$, the \emph{weak}-diameter of $A$ is $\diam_G(A)=\max_{v,u\in A}d_G(v,u)$, i.e. the maximal distance between a pair of vertices in $A$, w.r.t. to $d_G$. The \emph{strong}-diameter of $A$ is $\diam(G[A])$, the diameter of the graph induced by $A$. For illustration, in the figure to the right, consider the lower cluster encircled by a dashed red line. The weak-diameter of the cluster is $4$ (as $d_G(v,z)=4$) while the strong diameter is $6$ (as $d_{G[A]}(v,u)=6$).
A graph $H$ is a \emph{minor} of a graph $G$ if we can obtain $H$ from
$G$ by edge deletions/contractions, and isolated vertex deletions.  A graph
family $\mathcal{G}$ is \emph{$H$-minor-free} if no graph
$G\in\mathcal{G}$ has $H$ as a minor.
Some examples of minor free graphs are planar graphs ($K_5$ and $K_{3,3}$ minor-free), outer-planar graphs ($K_4$ and $K_{3,2}$ minor-free), series-parallel graphs ($K_4$ minor-free) and trees ($K_3$ minor-free).

\paragraph{Doubling dimension.}  The doubling dimension of a metric space is a measure of its local ``growth rate''. 
A metric space $(X,d)$ has doubling constant $\lambda$ if for every $x\in X$ and radius
$r>0$, the ball $B(x,2r)$ can be covered by $\lambda$ balls of radius $r$. The doubling dimension is defined as $\ddim=\log_2\lambda$. A $d$-dimensional $\ell_p$ space has $\ddim=\Theta(d)$, and every $n$ point metric has $\ddim\le \log n$.
We say that a weighted graph $G=(V,E,w)$ has doubling dimension $\ddim$, if the corresponding shortest path metric $(V,d_G)$ has doubling dimension $\ddim$.
The following lemma gives the standard packing property of doubling metrics (see, e.g., \cite{GKL03}).
\begin{lemma}[Packing Property] \label{lem:doubling_packing}
	Let $(X,d)$ be a metric space  with doubling dimension $\ddim$.
	If $S \subseteq X$ is a subset of points with minimum interpoint distance $r$ that is contained in a ball of radius $R$, then 
	$|S| \le \left(\frac{2R}{r}\right)^{O(\ddim)}$.
\end{lemma}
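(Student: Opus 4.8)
The plan is the classical repeated-halving argument. The idea is to use the doubling property iteratively to cover the enclosing ball by a controlled number of balls whose radius has been pushed strictly below $r/2$, and then to observe that such a small ball contains at most one point of $S$, since any two points of $S$ lie at distance at least $r$; this bounds $|S|$ by the number of balls in the cover.

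Concretely, let $x_0$ be the center of a ball $B(x_0,R)\supseteq S$, and recall that, by the definition of doubling dimension, every ball of radius $\rho$ is covered by $2^{\ddim}$ balls of radius $\rho/2$. Starting from the single ball $B(x_0,R)$ and applying this to each ball of the current cover, a one-line induction shows that after $k$ rounds $B(x_0,R)$ is covered by at most $2^{k\cdot\ddim}$ balls, each of radius $R/2^k$. I would then set $k=\lfloor\log_2(2R/r)\rfloor+1$, the least integer with $2^k>2R/r$, so that every ball of the resulting cover has radius $R/2^k<r/2$ and hence diameter strictly less than $r$. Consequently each such ball contains at most one point of $S$, and since every point of $S$ lies in $B(x_0,R)$ and therefore in one of these balls, we conclude $|S|\le 2^{k\cdot\ddim}$.

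Finally I would unwind the estimate on $k$: since $k\le\log_2(2R/r)+1=\log_2(4R/r)$, this yields $|S|\le 2^{\ddim\log_2(4R/r)}=(4R/r)^{\ddim}=(2R/r)^{O(\ddim)}$, as required. There is no genuine obstacle in this argument; the only point that deserves a moment's attention is driving the radius \emph{strictly} below $r/2$ rather than merely at most $r/2$ — otherwise two points of $S$ at distance exactly $r$ could share a ball — which is precisely why $k$ is chosen with the ``$+1$''.
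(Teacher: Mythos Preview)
Your argument is correct and is precisely the standard repeated-halving proof of the packing property. The paper itself does not prove this lemma; it quotes it as a well-known fact with a reference (``see, e.g., \cite{GKL03}''), so there is no in-paper proof to compare against. Your write-up matches the classical argument that such a citation points to, including the care taken to push the final radius strictly below $r/2$.
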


\paragraph{Nets.} A set $N\subseteq V$ is called a $\Delta$-net, if for every vertex $v\in V$ there is a net point $x\in N$ at distance at most $d_G(v,x)\le \Delta$, while every pair of net points $x,y\in N$, is farther than $d_G(x,y)>\Delta$.
A $\Delta$-net can be constructed efficiently in a greedy manner. In particular, by \Cref{lem:doubling_packing}, given a $\Delta$-net $N$ in a graph of doubling dimension $\ddim$, a ball of radius $R\ge\Delta$, will contain at most 
$\left(\frac{2R}{\Delta}\right)^{O(\ddim)}$ net points.

\paragraph{Padded Decompositions and Sparse Covers.}
Consider a \emph{partition} $\mathcal{P}$ of $V$ into disjoint clusters.
For $v\in V$, we denote by $P(v)$ the cluster $P\in \mathcal{P}$ that contains $v$.
A partition $\mathcal{P}$ is strongly $\Delta$-\emph{bounded} (resp. weakly $\Delta$-bounded ) if the strong-diameter (resp. weak-diameter) of every $P\in\mathcal{P}$ is bounded by $\Delta$.
If the ball $B_G(v,\gamma\Delta)$ of radius $\gamma\Delta$ around a vertex $v$ is fully contained in $P(v)$, we say that $v$ is $\gamma$-{\em padded} by $\mathcal{P}$. Otherwise, if $B_G(v,\gamma\Delta)\not\subseteq P(v)$, we say that the ball is \emph{cut} by the partition.
\begin{definition}[Padded Decomposition]\label{def:PadDecompostion}
	A distribution $\mathcal{D}$ over partitions of a graph $G=\left(V,E,w\right)$ is strong (resp. weak) $(\beta,\delta,\Delta)$-padded decomposition if every $\mathcal{P}\in\supp(\mathcal{D})$ is strongly (resp. weakly) $\Delta$-bounded and for any $0\le\gamma\le\delta$, and $z\in V$,
	$$\Pr[B_G(z,\gamma\Delta)\subseteq P(z)] \ge e^{-\beta\gamma}~.$$
	\emph{(Probabilistic Decomposition)} A distribution $\mathcal{D}$ over partitions of a graph $G=\left(V,E,w\right)$ is strong (resp. weak) $(\beta,\Delta)$-probabilistic decomposition if every $\mathcal{P}\in\supp(\mathcal{D})$ is strongly (resp. weakly) $\Delta$-bounded and for every pair  $u,v\in V$,
	$$\Pr[P(v)\ne P(u)] \le \beta\cdot \frac{d_G(u,v)}{\Delta}~.$$
	\emph{(ThProbabilistic Decomposition)} A distribution $\mathcal{D}$ over partitions of a graph $G=\left(V,E,w\right)$ is strong (resp. weak) $(\beta,p,\Delta)$-threshold probabilistic (abr. ThProbabilistic) decomposition if every $\mathcal{P}\in\supp(\mathcal{D})$ is strongly (resp. weakly) $\Delta$-bounded and for every pair  $u,v\in V$, such that $d_G(u,v)\le \frac{\Delta}{\beta}$,
	$$\Pr[P(v)= P(u)] \ge p~.$$
	
	We say that a graph $G$ admits a strong (resp. weak) $(\beta,\delta)$-padded decomposition scheme 
	$\big/$ $\beta$-probabilistic decomposition scheme $\big/$ $(\beta,p)$-ThProbabilistic decomposition scheme, if for every parameter $\Delta>0$ it admits a strong (resp. weak) $(\beta,\delta,\Delta)$-padded decomposition  
	$\big/$ $(\beta,\Delta)$-probabilistic decomposition $\big/$ $(\beta,p,\Delta)$-ThProbabilistic decomposition that can be sampled in polynomial time.  
\end{definition}

We observe that padded decompositions are stronger than probabilistic decompositions, which by themselves stronger than ThProbabilistic decompositions. 

\begin{observation}\label{obs:PadToProb}
	Suppose that a weighted graph  $G=(V,E,w)$ admits a strong/weak $(\beta,\delta,\Delta)$-padded decomposition $\mathcal{D}$ such that $\delta\ge \frac 1\beta$. Then $\mathcal{D}$ is also a strong/weak $(\beta,\Delta)$-probabilistic decomposition.
\end{observation}
\begin{proof}
	Let $v,u\in V$ be a pair of vertices. If $d_G(u,v)\ge \frac\Delta\beta$, then obviously $\Pr[P(v)\ne P(u)] \le 1\le \beta\cdot \frac{d_G(u,v)}{\Delta}$.
	Thus we can assume $d_G(u,v)\le \frac\Delta\beta\le \delta\Delta$. Set $\gamma=\frac{d_G(u,v)}{\Delta}$. It holds that 
	\[
	\Pr[P(v)=P(u)]\ge\Pr\left[B_{G}\left(v,\gamma\Delta\right)\subseteq P(v)\right]\ge e^{-\beta\gamma}\ge 1-\beta\gamma~.
	\]
	In particular, $\Pr[P(v)\ne P(u)]\le\beta\gamma=\beta\cdot\frac{d_G(u,v)}{\Delta}$ as required.
\end{proof}
\begin{observation}\label{obs:ProbtoThProb}
	Suppose that a weighted graph $G=(V,E,w)$ admits a strong/weak
	$(\beta,\Delta)$-probabilistic decomposition $\mathcal{D}$.
	Then for every $\beta\le\beta'$, $\mathcal{D}$ is also a strong/weak $(\beta',\frac{\beta}{\beta'},\Delta)$-probabilistic decomposition.
\end{observation}
\begin{proof}
	Let $v,u\in V$ be a pair of vertices at distance at most $d_G(u,v)\le \frac{\Delta}{\beta'}$, then $\Pr[P(v)\ne P(u)]\le\frac{\beta}{\Delta}\cdot\frac{\Delta}{\beta'}=\frac{\beta}{\beta'}$, and hence $\Pr[P(v)= P(u)]\ge1-\frac{\beta}{\beta'}$.
\end{proof}

A related notion to padded decompositions is sparse covers.
\begin{definition}[Sparse Cover]
	A collection of clusters $\mathcal{C} = \{C_1,..., C_t\}$ is called a weak/strong $(\beta,s,\Delta)$ sparse cover if the following conditions hold.
	\begin{enumerate}
		\item Bounded diameter: The weak/strong diameter of every $C_i\in\mathcal{C}$ is bounded by $\Delta$.\label{condition:RadiusBlowUp}
		\item Padding: For each $v\in V$, there exists a cluster $C_i\in\mathcal{C}$ such that $B_G(v,\frac\Delta\beta)\subseteq C_i$.
		\item Overlap: For each $v\in V$, there are at most $s$ clusters in $\mathcal{C}$ containing $v$.		
	\end{enumerate}
	We say that a graph $G$ admits a weak/strong $(\beta,s)$ sparse cover scheme, if for every parameter $\Delta>0$ it admits a weak/strong $(\beta,s,\Delta)$ sparse cover that can be constructed in expected polynomial time. 
\end{definition}

\paragraph{Truncated Exponential Distributions.}To create padded decompositions, similarly to previous works, we will use truncated exponential distribution. That is, exponential distribution conditioned on the event that the outcome lays in a certain interval. 
The \emph{$[\theta_1,\theta_2]$-truncated exponential distribution} with
parameter $\lambda$ is denoted by $\Texp_{[\theta_1, \theta_2]}(\lambda)$, and
the density function is:
$f(y)= \frac{ \lambda\, e^{-\lambda\cdot y} }{e^{-\lambda \cdot \theta_1} - e^{-\lambda
		\cdot \theta_2}}$, for $y \in [\theta_1, \theta_2]$.
For the \emph{$[0,1]$-truncated exponential distribution} we drop the
subscripts and denote it by $\Texp(\lambda)$, the density function is then
$f(y) = \frac{ \lambda\cdot e^{-\lambda\cdot y} }{1 - e^{-\lambda}}$.

\section{Strongly Padded Decomposition}\label{sec:centersToPadded}
In this section we prove the main technical theorem of this paper.  
\begin{theorem}\label{thm:padded}
	Let $G=(V,E,w)$ be a weighted graph and $\Delta>0,\tau=\Omega(1)$ parameters. Suppose that we are given a set $N\subseteq V$ of center vertices such that for every $v\in V$:
	\begin{itemize}
		\item \textsc{Covering.} There is $x\in N$ such that $d_G(v,x)\le \Delta$.
		\item \textsc{Packing.} There are at most $\tau$ vertices in $N$ at distance $3\Delta$, i.e. $\left|B_G(v,3\Delta)\cap N\right|\le \tau$.
	\end{itemize}
	Then $G$ admits a strong $\left(O(\ln \tau),\frac{1}{16},4\Delta\right)$-padded decomposition that can be efficiently sampled.
\end{theorem}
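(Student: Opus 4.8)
The plan is to use the Miller--Peng--Vanderdonckt-style clustering with exponentially distributed starting times, as sketched in the technical overview. Concretely, assign to each center $x\in N$ an independent starting time $\delta_x$, and let every vertex $v$ join the cluster of the center $x$ maximizing the score $f_v(x) := \delta_x - d_G(v,x)$ (breaking ties by a fixed order on $N$). I would first argue that this yields a genuine partition whose clusters have \emph{strong} diameter at most $4\Delta$: the key point is that if $v$ is assigned to $x$, then along a shortest $v$--$x$ path every intermediate vertex $u$ also has $f_u(x)$ at least as large as its own winning score (since scores change by at most the edge weight along the path, and $x$'s score can only increase as we move toward $x$), so the whole shortest path lies in $x$'s cluster; hence each cluster is connected and, choosing the $\delta_x$ to live in an interval of length $O(\Delta)$ while $d_G(v,x)\le\Delta$ for the winner, its strong radius around $x$ is $O(\Delta)$. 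One must pick the scaling so the final bound is exactly $4\Delta$; I expect starting times drawn from (a scaled) $\Texp$ on an interval like $[0,\Delta]$ or $[0,2\Delta]$ together with the covering property to do it.

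Next comes the padding analysis, which is the heart of the proof. Fix $z\in V$ and $\gamma\le\frac{1}{16}$; let $B=B_G(z,\gamma\Delta)$. The crucial deterministic observation is: if $x^*$ is the center maximizing $\delta_{x^*}-d_G(z,x^*)$ and for \emph{every} other center $y$ we have $\bigl(\delta_{x^*}-d_G(z,x^*)\bigr)-\bigl(\delta_y-d_G(z,y)\bigr)\ge 2\gamma\Delta$, then every vertex $v\in B$ is also won by $x^*$, because moving from $z$ to $v$ changes each score by at most $\gamma\Delta$, so $x^*$'s margin over every competitor stays positive; thus $B\subseteq P(z)$. So it suffices to lower bound the probability of this "large gap" event. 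Here I would restrict attention to the centers in $B_G(z,3\Delta)\cap N$ — there are at most $\tau$ of them by the packing hypothesis, and centers outside this ball are too far to ever beat a center within distance $\Delta$ of $z$ given the bounded support of the $\delta$'s, so they are irrelevant. Among these $\le\tau$ relevant centers, condition on the identity of the maximizer and on the value of the second-highest score; the probability that the maximizer exceeds the runner-up by at least $2\gamma\Delta$ is, for the truncated exponential with parameter $\lambda=\Theta(\ln\tau)$, at least $e^{-\lambda\cdot 2\gamma\Delta/(\text{interval length})}=e^{-O(\gamma\ln\tau)}$; a union-bound / integration argument over the $\tau$ competitors, each of which "costs" a factor controlled by $\lambda$, is absorbed into the constant because $\lambda\asymp\ln\tau$. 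This is the standard trick: the memorylessness of the (truncated) exponential lets one show the top score beats the rest by a margin with the desired probability, and the truncation (together with $\gamma\le\frac1{16}$ keeping $2\gamma\Delta$ comfortably inside the interval) handles the boundary effects.

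The main obstacle, as usual in these arguments, is the probabilistic gap estimate done carefully: one needs that with probability $\ge e^{-O(\gamma\ln\tau)}$ the winning score beats \emph{all} of the up-to-$\tau$ competing scores by $2\gamma\Delta$ simultaneously, not just one of them. The clean way is to condition on the maximum being achieved by a particular center $x$ with a particular value $t$, note that conditionally the other scores are independent and each at most $t$, and compute the probability that re-sampling $x$'s time (using memorylessness on the event that it is the max) pushes it up by an extra $2\gamma\Delta$; because the density is proportional to $\lambda e^{-\lambda y}$, shifting the threshold up by $2\gamma\Delta$ costs a multiplicative $e^{-2\lambda\gamma\Delta/L}$ where $L$ is the interval length, and setting $L=\Theta(\Delta)$ and $\lambda=\Theta(\ln\tau)$ gives exactly $e^{-O(\gamma\ln\tau)}$, provided we also verify the truncation tail $e^{-\lambda\theta_2}$ in the denominator is not too large — which holds since $\tau=\Omega(1)$ and we may take $\lambda\theta_2 = O(\ln\tau)$ with a controlled constant. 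I would also double-check the edge case where $z$ itself is within $\gamma\Delta$ of a net boundary but the covering guarantee still furnishes a center within distance $\Delta$, so the maximizer's raw score is bounded below, keeping the far centers out of contention. Once these estimates are pinned down, assembling the bounded-strong-diameter claim and the padding probability into the statement of \Cref{thm:padded} is routine.
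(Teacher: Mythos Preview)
Your proposal is correct and follows essentially the same approach as the paper: the MPVX-style clustering with $\delta_x$ drawn from a truncated exponential of rate $\lambda=\Theta(\ln\tau)$ on $[0,\Delta]$, the shortest-path argument for strong diameter $4\Delta$, the $2\gamma\Delta$-gap criterion for padding, and a memorylessness-plus-truncation analysis restricted to the $\le\tau$ centers in $B_G(z,3\Delta)$. The paper executes the probabilistic step by proving, for each center $x_i$ (conditioning on all other $\delta_{x_j}$), that $\Pr[\mathcal{C}_i]\le(1-e^{-2\gamma\lambda})\bigl(\Pr[\mathcal{F}_i]+\tfrac{1}{e^\lambda-1}\bigr)$ and then summing over $i$, which is exactly the precise form of your ``condition on the runner-up and use memorylessness'' sketch, with the additive $\tfrac{1}{e^\lambda-1}$ term absorbing the truncation boundary and the choice $\lambda=2+2\ln\tau$ ensuring the summed error $\tfrac{\tau}{e^\lambda-1}$ stays below $e^{-2\gamma\lambda}$.
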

We start with description of the \cite{MPX13} algorithm (with some adaptations), and its properties. Later, in \Cref{subsec:ThmPadProof} we will prove \Cref{thm:padded}. 

\subsection{Clustering Algorithm Based on Starting Times}\label{subsec:clustering}
As we make some small adaptations, and the role of the clustering algorithm is essential, we provide full details.
Let $\Delta>0$ be some parameter and let $N\subseteq V$ be some set of centers such that for every $v\in V$, $d_G(v,N)\le \Delta$. 
For each center $x\in N$, let $\delta_x\in [0,\Delta]$ be some parameter. The choice of $\{\delta_x\}_{x\in N}$ differs among different implementations of the algorithm. In our case we will sample $\delta_x$ using truncated exponential distribution.
Each vertex $v$ will join the cluster $C_x$ of the center $x\in N$ for which the value $\delta_x-d_G(x,v)$ is maximized. Ties are broken in a consistent manner, that is we have some order $x_1,x_2,\dots$ . Among the centers $x_i$ that minimize $\delta_{x_i}-d_G(x_i,v)$, $v$ will join the cluster of the center with minimal index.
Note that it is possible that a center $x\in N$ will join the cluster of a different center $x'\in N$.
An intuitive way to think about the clustering process is as follows: each center $x$ wakes up at time $-\delta_x$ and begins to ``spread'' in a continuous manner. The spread of all centers done in the same unit tempo. A vertex $v$ joins the cluster of the first center that reaches it. 

\begin{claim}\label{claim:StrongDiam}
	Every non-empty cluster $C_x$ created by the algorithm has strong diameter at most $4\Delta$ .
\end{claim}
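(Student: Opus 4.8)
\textbf{Proof plan for Claim~\ref{claim:StrongDiam}.}

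The plan is to show that every vertex $v \in C_x$ is connected to the center $x$ within $G[C_x]$ by a path of length at most $2\Delta$; the triangle inequality in $G[C_x]$ then gives strong diameter at most $4\Delta$. The key structural fact to establish is a \emph{monotonicity along shortest paths} property: if $v \in C_x$ and $P = (v = u_0, u_1, \dots, u_k = x)$ is a shortest path in $G$ from $v$ to $x$, then every vertex $u_i$ on $P$ also belongs to $C_x$. Granting this, since $v \in C_x$ we have $\delta_x - d_G(x,v) \ge \delta_{x'} - d_G(x',v)$ for all $x'$, and in particular, taking $x'$ to be a center with $d_G(v, x') \le \Delta$ (which exists by the covering assumption $d_G(v,N)\le\Delta$), we get $\delta_x - d_G(x,v) \ge \delta_{x'} - d_G(x',v) \ge 0 - \Delta = -\Delta$. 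Since $\delta_x \le \Delta$, this yields $d_G(x,v) \le \delta_x + \Delta \le 2\Delta$. So the shortest path $P$ from $v$ to $x$ has length $\le 2\Delta$, and by the monotonicity property it lies entirely inside $C_x$, hence $d_{G[C_x]}(v,x) \le 2\Delta$.

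First I would prove the monotonicity property. Let $v \in C_x$ and let $u$ be a vertex on a shortest $v$--$x$ path, so $d_G(v,x) = d_G(v,u) + d_G(u,x)$. I want to show $u \in C_x$, i.e.\ that $\delta_x - d_G(x,u) \ge \delta_{y} - d_G(y,u)$ for every center $y$, with ties resolved toward $x$ by the consistent tie-breaking rule. Suppose toward a contradiction that some center $y$ strictly beats $x$ at $u$ (or ties and has smaller index), i.e.\ $\delta_y - d_G(y,u) \ge \delta_x - d_G(x,u)$. Then, using the triangle inequality $d_G(y,v) \le d_G(y,u) + d_G(u,v)$,
\[
\delta_y - d_G(y,v) \;\ge\; \delta_y - d_G(y,u) - d_G(u,v) \;\ge\; \delta_x - d_G(x,u) - d_G(u,v) \;=\; \delta_x - d_G(x,v),
\]
where the last equality uses that $u$ lies on a shortest $v$--$x$ path. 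This contradicts $v \in C_x$ (and a careful bookkeeping of the tie-breaking order shows the strict/tie cases propagate correctly, since $y$ beating or tie-beating $x$ at $u$ forces the same at $v$). Hence $u \in C_x$.

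The main obstacle I anticipate is the careful handling of the consistent tie-breaking: the inequalities above are stated with $\ge$, but membership in $C_x$ is determined by a combination of maximizing $\delta_{(\cdot)} - d_G(\cdot, v)$ \emph{and} the index order, so one must verify that whenever $y$ ``wins or ties with smaller index'' at $u$, the same verdict holds at $v$ — which follows because the value inequality is preserved and the index comparison is independent of the vertex. A secondary point is confirming the clusters are vertex-disjoint and cover $V$ (so ``$v \in C_x$'' is well-defined for a unique $x$), which is immediate from the definition since every $v$ satisfies $d_G(v,N) \le \Delta < \infty$ and thus has a well-defined winning center. Once monotonicity and the $d_G(x,v) \le 2\Delta$ bound are in hand, the conclusion $\diam(G[C_x]) \le 4\Delta$ is just the triangle inequality inside $G[C_x]$.
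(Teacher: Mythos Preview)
Your proposal is correct and follows essentially the same argument as the paper: bound $d_G(v,x)\le 2\Delta$ by comparing $x$ to a nearby center, then show that the shortest $v$--$x$ path stays inside $C_x$ via the same triangle-inequality computation, and conclude by the triangle inequality in $G[C_x]$. The only differences are cosmetic --- you reverse the order of the two steps and spell out the tie-breaking bookkeeping that the paper leaves implicit.
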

\begin{wrapfigure}{r}{0.13\textwidth}
	\begin{center}
		\vspace{-20pt}
		\includegraphics[width=0.13\textwidth]{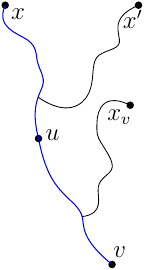}
		\vspace{-17pt}
	\end{center}
	\vspace{-10pt}
\end{wrapfigure}

\emph{Proof.}
	Consider a vertex $v\in C_x$. We argue that $d_G(v,x)\le 2\Delta$. This will already imply that $C_x$ has weak diameter $4\Delta$. 
	Let $x_v$ be the closest center to $v$, then $d_G(v,x_v)\le \Delta$. As $v$ joined the cluster of $x$, it holds that
	$\delta_{x}-d_{G}(v,x)\ge\delta_{x_v}-d_{G}(v,x_v)$. In particular $d_{G}(v,x)\le\delta_{x}+d_{G}(v,x_v)\le2\Delta$. 
	
	Let $\mathcal{I}$ be the shortest path in $G$ from $v$ to $x$ (the blue path on the illustration on the right). 
	For every vertex $u\in\mathcal{I}$ and center $x'\in N$, it holds that
	\[
	\delta(x)-d_{G}(u,x)=\delta(x)-\left(d_{G}(v,x)-d_{G}(v,u)\right)\stackrel{(*)}{\ge}\delta(x')-d_{G}(v,x')+d_{G}(v,u)\ge\delta(x')-d_{G}(u,x')~.
	\]
	If the inequality $^{(*)}$ is strict, then $\delta(x)-d_{G}(u,x)>\delta(x')-d_{G}(u,x')$ and $u$ will prefer $x$ over $x'$. Otherwise, necessarily $\delta(x)-d_{G}(v,x)=\delta(x')-d_{G}(v,x')$ and $x$ has smaller index then $x'$ (as it was preferred by $v$). In particular
 	$\delta(x)-d_{G}(u,x)=\delta(x')-d_{G}(u,x')$ and hence $u$ will prefer $x$ over $x'$. 	
 	We conclude that $u$ will prefer the center $x$ over any other center. It follows that $\mathcal{I}\subseteq C_x$. In particular, $d_{G[C_x]}(v,x)\le2\Delta$. 
 	The claim now follows.
 	\qed

\begin{claim}\label{claim:PadProperty}
	Consider a vertex $v$, and let $x_{(1)},x_{(2)},\dots$ be an ordering of the centers w.r.t. $\delta(x_{(i)})-d_G(v,x_{(i)})$. That is $\delta(x_{(1)})-d_G(v,x_{(1)})\ge \delta(x_{(2)})-d_G(v,x_{(2)})\ge\dots$ .  
	Set $\Upsilon=(\delta(x_{(1)})-d_G(v,x_{(1)})) - (\delta(x_{(2)})-d_G(v,x_{(2)}))$. Then for every vertex $u$ such that $d_G(v,u)<\frac{\Upsilon}{2}$ it holds that $u\in C_{x_{(1)}}$.
\end{claim}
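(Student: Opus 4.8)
The plan is to prove the slightly stronger statement that $x_1$ is the \emph{strict} maximizer of the key $x\mapsto\delta(x)-d_G(u,x)$ among all centers; once this is established, $u$ joins $C_{x_1}$ directly from the definition of the clustering algorithm, and the consistent tie-breaking rule never comes into play.

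First I would lower bound the key of $x_1$ at $u$. By the triangle inequality $d_G(u,x_1)\le d_G(u,v)+d_G(v,x_1)$, so
\[
\delta(x_1)-d_G(u,x_1)\;\ge\;\bigl(\delta(x_1)-d_G(v,x_1)\bigr)-d_G(u,v)~.
\]
Symmetrically, for an arbitrary center $x'$ the triangle inequality $d_G(u,x')\ge d_G(v,x')-d_G(v,u)$ yields the upper bound
\[
\delta(x')-d_G(u,x')\;\le\;\bigl(\delta(x')-d_G(v,x')\bigr)+d_G(u,v)~.
\]

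Now I would fix a competitor $x'\ne x_1$. By the choice of the ordering, $\delta(x')-d_G(v,x')\le\delta(x_2)-d_G(v,x_2)$. Subtracting the second displayed inequality from the first and invoking the definition of $\Upsilon$,
\[
\bigl(\delta(x_1)-d_G(u,x_1)\bigr)-\bigl(\delta(x')-d_G(u,x')\bigr)\;\ge\;\Upsilon-2\,d_G(u,v)\;>\;0~,
\]
where the strict inequality is exactly the hypothesis $d_G(u,v)<\Upsilon/2$. Hence $x_1$ strictly dominates every other center at $u$, so $u\in C_{x_1}$.

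I do not expect a genuine obstacle: the argument is two applications of the triangle inequality plus the definition of $\Upsilon$. The only points deserving a word of care are that the final inequality is strict — which is precisely why tie-breaking is irrelevant — and that one implicitly assumes $|N|\ge 2$ so that $x_2$, and hence $\Upsilon$, is defined; if $|N|=1$ the whole graph forms a single cluster and there is nothing to prove.
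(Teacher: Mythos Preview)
Your proof is correct and is essentially identical to the paper's own argument: both apply the triangle inequality once to lower-bound $\delta(x_1)-d_G(u,x_1)$ and once to upper-bound $\delta(x_i)-d_G(u,x_i)$, then combine with the definition of $\Upsilon$ to obtain the strict inequality. The paper simply compresses these steps into a single chain of inequalities, while you spell them out separately; your side remarks on tie-breaking and the case $|N|=1$ are sound but not needed for the comparison.
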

\begin{proof}
	For every center $x_{(i)}\ne x_{(1)}$ it holds that,
	\begin{align*}
	\delta(x_{1})-d_{G}(u,x_{(1)}) & >\delta(x_{(1)})-d_{G}(v,x_{(1)})-\frac{\Upsilon}{2}\ge\delta(x_{(i)})-d_{G}(v,x_{(i)})+\frac{\Upsilon}{2}>\delta(x_{(i)})-d_{G}(u,x_{(i)})~.
	\end{align*}
	In particular, $u\in C_{x_1}$. 
\end{proof}

\subsection{Proof of \Cref{thm:padded}}\label{subsec:ThmPadProof}
For every center $x\in N$, we sample $\delta'_x\in[0,1]$ according to $\Texp(\lambda)$ truncated exponential distribution with parameter $\lambda=2+2\ln \tau$. Set $\delta_x=\delta'_x\cdot\Delta\in[0,\Delta]$. We execute the clustering algorithm from \Cref{subsec:clustering} with parameters $\{\delta_x\}_{x\in N}$ to get a partition $\mathcal{P}$.

According to \Cref{claim:StrongDiam}, we created a distribution over strongly $4\Delta$-bounded partitions. 
Consider some vertex $v\in V$ and parameter $\gamma\le\frac14$. We will argue that the ball $B=B_G(v,\gamma \Delta)$ is fully contained in $P(v)$ with probability at least $e^{-O(\gamma\log\tau)}$.
Let $N_v$ be the set of centers $x$ for which there is non zero probability that $C_{x}$ intersects $B$. 
Following the calculation in \Cref{claim:StrongDiam}, each vertex joins the cluster of a center at distance at most $2\Delta$.
By triangle inequality, all the centers in $N_v$ are at distance at most $(2+\gamma)\Delta\le 3\Delta$ from $v$. In particular $|N_v|\le \tau$.

Set $N_v=\{x_1,x_2,\dots\}$ ordered arbitrarily. 
Denote by $\mathcal{F}_i$ the event that $v$ joins the cluster of $x_i$, i.e. $v\in C_{x_i}$. 
Denote by $\mathcal{C}_i$ the event that $v$ joins the cluster of $x_i$, but not all of the vertices in $B$ joined that cluster, that is $v\in C_{x_i}\cap B\neq B$. 
To prove the theorem, it is enough to show that $\Pr\left[\cup_{i}\mathcal{C}_{i}\right]\le 1-e^{-O(\gamma\cdot\lambda)}$.
Set $\alpha=e^{-2\gamma\cdot\lambda}$.

\begin{claim}
	For every $i$, 
	$\Pr\left[\mathcal{C}_{i}\right]\le\left(1-\alpha\right)\left(\Pr\left[\mathcal{F}_{i}\right]+\frac{1}{e^{\lambda}-1}\right)$.
\end{claim}
\begin{proof}
	As the order in $N_v$ is arbitrary, assume w.l.o.g. that $i=|N_v|$ and denote $x=x_{|N_v|}$, $\mathcal{C}=\mathcal{C}_i$, $\mathcal{F}=\mathcal{F}_i$, $\delta=\delta_{x_i}$ and $\delta'=\delta'_{x_i}$.
	Let $X\in [0,1]^{|N_v|-1}$ be the vector where the $j$'th coordinate equals $\delta'_{x_j}$. Set  $\rho_{X}=\frac{1}{\Delta}\cdot\left(d_{G}(x,v)+\max_{j<|N_v|}\left\{ \delta_{x_{j}}-d_{G}(x_{j},v)\right\} \right)$. Note that $\rho_X$ is the minimal value of $\delta'$ such that if $\delta'>\rho_X$, then $x$ has the maximal value $\delta_{x}-d_{G}(x,v)$, and therefore $v$ will join the cluster of $x$. 
	Note that it is possible that $\rho_X>1$.
	Conditioning on the samples having values $X$, and assuming that $\rho_{X}\le 1$ it holds that
	\[
	\Pr\left[\mathcal{F}\mid X\right]=\Pr\left[\delta'>\rho_{X}\right]=\int_{\rho_{X}}^{1}\frac{\lambda\cdot e^{-\lambda y}}{1-e^{-\lambda}}dy=\frac{e^{-\rho_{X}\cdot\lambda}-e^{-\lambda}}{1-e^{-\lambda}}~.
	\]
	If $\delta'>\rho_{X}+2\gamma$ then $\delta-d_{G}(x,v)>\max_{j\ne i}\left\{ \delta_{x_{i}}-d_{G}(x_{i},v)\right\} +2\gamma\Delta$. In particular, by \Cref{claim:PadProperty} the ball $B$ will be contained in $C_{x}$.
	We conclude
	\begin{align*}
	\Pr\left[\mathcal{C}\mid X\right] & \le\Pr\left[\rho_{X}\le\delta'\le\rho_{X}+2\gamma\right]\\
	& =\int_{\rho_{X}}^{\max\left\{ 1,\rho_{X}+2\gamma\right\} }\frac{\lambda\cdot e^{-\lambda y}}{1-e^{-\lambda}}dy\\
	& \le\frac{e^{-\rho_{X}\cdot\lambda}-e^{-\left(\rho_{X}+2\gamma\right)\cdot\lambda}}{1-e^{-\lambda}}\\
	& =\left(1-e^{-2\gamma\cdot\lambda}\right)\cdot\frac{e^{-\rho_{X}\cdot\lambda}}{1-e^{-\lambda}}\\
	& =\left(1-\alpha\right)\cdot\left(\Pr\left[\mathcal{F}\mid X\right]+\frac{1}{e^{\lambda}-1}\right)~.
	\end{align*}
	Note that if $\rho_{X}> 1$ then $\Pr\left[\mathcal{C}\mid X\right]=0\le \left(1-\alpha\right)\cdot\left(\Pr\left[\mathcal{F}\mid X\right]+\frac{1}{e^{\lambda}-1}\right)$ as well.
	Denote by $f$ the density function of the distribution over all possible values of $X$. Using the law of total probability, we can bound the probability that the cluster of $x$ cuts $B$
	\begin{align*}
	\Pr\left[\mathcal{C}\right] & =\int_{X}\Pr\left[\mathcal{C}\mid X\right]\cdot f(X)~dX\\
	& \le\left(1-\alpha\right)\cdot\int_{X}\left(\Pr\left[\mathcal{F}\mid X\right]+\frac{1}{e^{\lambda}-1}\right)\cdot f(X)~dX\\
	& =\left(1-\alpha\right)\cdot\left(\Pr\left[\mathcal{F}\right]+\frac{1}{e^{\lambda}-1}\right)
	\end{align*}
\end{proof}

We bound the probability that the ball $B$ is cut. 
\begin{align*}
\Pr\left[\cup_{i}\mathcal{C}_{i}\right]=\sum_{i=1}^{|N_{v}|}\Pr\left[\mathcal{C}_{i}\right] & \le\left(1-\alpha\right)\cdot\sum_{i=1}^{|N_{v}|}\left(\Pr\left[\mathcal{F}_{i}\right]+\frac{1}{e^{\lambda}-1}\right)\\
& \le\left(1-e^{-2\gamma\cdot\lambda}\right)\cdot\left(1+\frac{\tau}{e^{\lambda}-1}\right)\\
& \le\left(1-e^{-2\gamma\cdot\lambda}\right)\cdot\left(1+e^{-2\gamma\cdot\lambda}\right)=1-e^{-4\gamma\cdot\lambda}~,
\end{align*}
where the last inequality follows as
$e^{-2\gamma\lambda}=\frac{e^{-2\gamma\lambda}\left(e^{\lambda}-1\right)}{e^{\lambda}-1}\ge\frac{e^{-2\gamma\lambda}\cdot e^{\lambda-1}}{e^{\lambda}-1}\ge\frac{e^{\frac{\lambda}{2}-1}}{e^{\lambda}-1}=\frac{\tau}{e^{\lambda}-1}$.
To conclude, we obtain a strongly $4\Delta$-bounded partition, such that for every $\gamma\le\frac{1}{16}$ and $v\in V$, the ball $B_G(v,\gamma\cdot4\Delta)$ is fully contained in a single cluster with probability at least \[
\Pr\left[B(v,\gamma\cdot4\Delta)\subseteq P(v)\right]\ge e^{-4\cdot(4\gamma)\cdot\lambda}=e^{-\gamma\cdot32(1+\ln\tau)}~.
\]
\qed


\section{Doubling Dimension}\label{sec:doubling}
Our strongly padded decompositions for doubling graphs are a simple corollary of \Cref{thm:padded}.
\begin{corollary}\label{cor:PaddedDoubling}
	Let $G=(V,E,w)$ be a weighted graph with doubling dimension $\ddim$. Then $G$ admits a strong $\left(O(\ddim),\Omega(1)\right)$-padded decomposition scheme.
\end{corollary}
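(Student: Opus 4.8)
The plan is to derive this directly from \Cref{thm:padded} by producing a suitable center set via a net. Fix a target scale $\Delta'>0$ for which we must exhibit a strongly $(O(\ddim),\Omega(1),\Delta')$-padded decomposition. Set $\Delta=\Delta'/4$ and let $N\subseteq V$ be a $\Delta$-net of $G$, which can be built greedily in polynomial time. The \textsc{Covering} hypothesis of \Cref{thm:padded} is immediate: by definition of a $\Delta$-net, every $v\in V$ has a net point within distance $\Delta$.

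Next I would verify the \textsc{Packing} hypothesis. The net points lying in $B_G(v,3\Delta)$ form a set of pairwise distance $>\Delta$ contained in a ball of radius $3\Delta$, so by \Cref{lem:doubling_packing} (applied with $r=\Delta$, $R=3\Delta$), $\left|B_G(v,3\Delta)\cap N\right|\le \left(\tfrac{2\cdot 3\Delta}{\Delta}\right)^{O(\ddim)}=6^{O(\ddim)}=2^{O(\ddim)}$. Define $\tau:=\max\{2,\,2^{c\cdot\ddim}\}$ for the appropriate absolute constant $c$, so that $\tau=\Omega(1)$ and $\ln\tau=O(\ddim)$, and $N$ satisfies both hypotheses of \Cref{thm:padded} at scale $\Delta$ with parameter $\tau$.

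Applying \Cref{thm:padded} then yields a strongly $\bigl(O(\ln\tau),\tfrac1{16},4\Delta\bigr)$-padded decomposition that can be efficiently sampled; substituting $\ln\tau=O(\ddim)$ and $4\Delta=\Delta'$ gives a strongly $\bigl(O(\ddim),\tfrac1{16},\Delta'\bigr)$-padded decomposition. Since $\Delta'>0$ was arbitrary, $\tfrac1{16}=\Omega(1)$, and every step is polynomial time, $G$ admits a strong $\bigl(O(\ddim),\Omega(1)\bigr)$-padded decomposition scheme, as claimed.

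There is essentially no hard step here — it is a parameter-bookkeeping corollary — but the one place to be slightly careful is ensuring $\tau=\Omega(1)$ (handled by the $\max$ with $2$, which matters only in degenerate low-dimensional cases) and tracking that the scale blow-up from \Cref{thm:padded} is exactly the factor $4$ absorbed by choosing $\Delta=\Delta'/4$.
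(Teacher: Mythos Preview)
Your proposal is correct and follows essentially the same approach as the paper: take a $\Delta$-net as the center set, use the packing property (\Cref{lem:doubling_packing}) to bound the number of nearby centers by $2^{O(\ddim)}$, and invoke \Cref{thm:padded}. You are simply more explicit about the bookkeeping (the factor-$4$ scale adjustment and the $\tau=\Omega(1)$ requirement) than the paper's two-line proof.
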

\begin{proof}
	Fix some $\Delta>0$. Let $N$ be a $\Delta$-net of $X$. According to \Cref{lem:doubling_packing}, for every vertex $v$, the number of net points at distance $3\Delta$ is bounded by $2^{O(\ddim)}$. The corollary follows by \Cref{thm:padded}.
\end{proof}

Next, we construct a sparse cover scheme.
\begin{theorem}\label{thm:DdimCover}
	Let $G=(V,E,w)$ be a weighted graph with doubling dimension $\ddim$ and parameter $t\ge1$. Then $G$ admits a strong $\left(O(t),O(2^{\nicefrac{\ddim}{t}}\cdot\ddim\cdot\log t)\right)$ sparse cover scheme. 
	In particular, there is a strong $\left(\ddim,O(\ddim\cdot\log \ddim)\right)$ sparse cover scheme.	
	
\end{theorem}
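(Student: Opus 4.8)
The plan is to obtain the sparse cover by sampling several independent copies of the padded decomposition of \Cref{thm:padded} and gluing all their clusters together, using the Lov\'asz Local Lemma (LLL) to guarantee that \emph{every} small ball survives in at least one copy. The key structural fact that makes this work — and the reason the \cite{MPVX15}-style algorithm of \Cref{subsec:clustering} is used rather than the sequential cone construction of \Cref{sec:cones} — is \emph{locality}: whether a bounded-radius ball is cut is determined only by the starting times of the centers in a bounded neighborhood of that ball.

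Fix a target scale $\Delta>0$ and set $\beta:=32t$, so that the radius $\Delta/\beta$ at which we need padding corresponds to $\gamma:=2/\beta\le\frac1{16}$, inside the range of \Cref{thm:padded}. Let $N$ be a $(\Delta/4)$-net of $V$; by \Cref{lem:doubling_packing} every ball of radius $3\cdot\frac\Delta4$ contains at most $\tau=2^{O(\ddim)}$ net points, so running the clustering of \Cref{subsec:clustering} with the truncated-exponential starting times used in the proof of \Cref{thm:padded} produces a strong $(\beta_0,\frac1{16},\Delta)$-padded decomposition with $\beta_0=O(\ddim)$, whose clusters (\Cref{claim:StrongDiam}) have strong diameter at most $\Delta$. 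I will draw $m$ such partitions $\mathcal P^{(1)},\dots,\mathcal P^{(m)}$ independently and let $\mathcal C$ be the collection of all their non-empty clusters; since every vertex lies in exactly one cluster of each $\mathcal P^{(j)}$, the overlap of $\mathcal C$ is at most $m$.

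To pin down $m$, let $\hat N$ be a $(\Delta/\beta)$-net and, for $\hat x\in\hat N$, let $A_{\hat x}$ be the bad event that $B_G(\hat x,2\Delta/\beta)$ is cut in all $m$ partitions; if no $A_{\hat x}$ occurs then, taking for any $v$ its nearest point $\hat x\in\hat N$, the ball $B_G(v,\Delta/\beta)$ lies in the cluster containing $B_G(\hat x,2\Delta/\beta)$, so $\mathcal C$ is an $(O(t),m,\Delta)$-sparse cover. By the padding guarantee and independence, $\Pr[A_{\hat x}]\le(1-e^{-\beta_0\gamma})^m\le e^{-m\cdot e^{-O(\ddim/t)}}=:p$. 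Rerunning the estimates in the proof of \Cref{claim:StrongDiam}, every vertex of $B_G(\hat x,2\Delta/\beta)$ joins the cluster of a center within distance $\Delta/2$, and only such centers can matter to its choice, so (invoking \Cref{claim:PadProperty}) the event $A_{\hat x}$ is a function of $\{\delta^{(j)}_{x'}:x'\in N\cap B_G(\hat x,\Delta),\ j\in[m]\}$; hence $A_{\hat x}$ and $A_{\hat y}$ are independent whenever $d_G(\hat x,\hat y)>2\Delta$, and by \Cref{lem:doubling_packing} each $A_{\hat x}$ depends on at most $D=(O(t))^{O(\ddim)}=2^{O(\ddim\log t)}$ others. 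The symmetric LLL applies once $e\,p\,(D+1)\le1$, which holds as soon as $m\,e^{-O(\ddim/t)}\ge O(\ddim\log t)$, i.e. for $m=2^{O(\ddim/t)}\cdot O(\ddim\log t)$. This yields a sparse cover with padding $O(t)$ and overlap $O(2^{\ddim/t}\cdot\ddim\cdot\log t)$, and the stated special case follows by setting $t=\ddim$.

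The main obstacle is the locality claim — proving that ``$B_G(\hat x,2\Delta/\beta)$ is uncut in all $m$ samples'' genuinely depends only on the starting-time variables of centers in $B_G(\hat x,O(\Delta))$, with no long-range dependence; this is exactly where \Cref{claim:StrongDiam} and \Cref{claim:PadProperty} are needed, and it fails for the sequential constructions. Secondary points are the constant bookkeeping that keeps $\gamma\le\frac1{16}$ throughout, checking that $m$ is polynomially bounded (it is, since $2^{\ddim/t}\le 2^{\ddim}=n^{O(1)}$), and replacing the existential LLL by the Moser--Tardos algorithm applied to the finite families $\{A_{\hat x}\}_{\hat x\in\hat N}$ and $\{\delta^{(j)}_x\}$ to obtain the claimed expected-polynomial-time construction.
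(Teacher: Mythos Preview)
Your proposal is correct and follows essentially the same approach as the paper: sample $m=O(2^{\ddim/t}\cdot\ddim\log t)$ independent copies of the \cite{MPVX15}-style decomposition from \Cref{thm:padded}, restrict attention to a $\Delta/\beta$-net of ``test points'', and use the (constructive) Lov\'asz Local Lemma on the bad events, exploiting that each bad event depends only on the starting times of centers in an $O(\Delta)$-ball. The only cosmetic differences are that the paper works with the slightly larger ``small-margin'' event $\Psi_v$ (containing the cut event) rather than the cut event itself, and states the asymmetric LLL rather than the symmetric one; also, to land on $O(2^{\ddim/t})$ rather than $2^{O(\ddim/t)}$ you should leave $\beta=\alpha t$ with $\alpha$ chosen to cancel the hidden constant in $\beta_0=O(\ddim)$, as the paper does.
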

\begin{proof}
	Let $\Delta>0$ be the diameter parameter. Let $\alpha=\theta(1)$ be a constant to be determined later, set $\beta=\alpha\cdot t$. We will construct a strong $\left(\beta, O(2^{\nicefrac{\ddim}{t}}\cdot\ddim\cdot\log t),4\Delta\right)$ sparse cover. As $\Delta$ is arbitrary, this will imply strong $\left(4\beta, O(2^{\nicefrac{\ddim}{t}}\cdot\ddim\cdot\log t)\right)$ sparse cover scheme.
	
	The sparse cover is constructed by sampling $O(2^{\nicefrac{\ddim}{t}}\cdot\ddim\cdot\log t)$ independent partitions using  \Cref{cor:PaddedDoubling} with diameter parameter $\Delta$, and taking all the clusters from all the partitions to the cover. The sparsity and strong diameter properties are straightforward. To argue that each vertex is padded in some cluster we will use the constructive version of the Lov\'asz Local Lemma by Moser and Tardos \cite{MT10}.
	\begin{lemma}[Constructive Lov\'asz Local Lemma \cite{MT10}] \label{lem:lovasz}
		Let $\mathcal{P}$ be a finite set of mutually independent random variables in a probability space. Let $\mathcal{A}$ be a set of events determined by these variables. For $A\in\mathcal{A}$ let $\Gamma(A)$ be a subset of $\mathcal{A}$ satisfying that $A$ is independent from the collection of events $\mathcal{A}\setminus(\{A\}\cup\Gamma(A))$.
		If there exist an assignment of reals $x:\mathcal{A}\rightarrow(0,1)$  such that
		\[
		\forall A\in \mathcal{A}~:~~\Pr[A]\le x(A)\cdot\Pi_{B\in\Gamma(A)}(1-x(B))~,
		\]
		then there exists an assignment to the variables $\mathcal{P}$ not violating any of the events in $\mathcal{A}$. Moreover, there is an algorithm that finds such an assignment in expected time  $\sum_{A\in\mathcal{A}}\frac{x(A)}{1-x(A)}\cdot\poly \left(|\mathcal{A}|+|\mathcal{P}|\right)$.
	\end{lemma} 
	
	Formally, recall the construction of \Cref{thm:padded} used in \Cref{cor:PaddedDoubling}.
	Let $N$ be a $\Delta$-net, that we will use as centers. 
	Consider an arbitrary vertex $v\in V$, and fix some sample of the starting times $\{\delta_x\}_{x\in N}$. Let $x_{(1)}$ be the vertex maximizing $\delta_x-d_G(x,v)$ and $x_{(2)}$ the second largest. In other words, $\delta_{x_{(1)}}-d_G(x_{(1)},v)\ge \delta_{x_{(2)}}-d_G(x_{(2)},v)\ge\max_{x\in N\setminus\{x_{(1)},x_{(2)}\}}\{\delta_{x}-d_G(x,v)\}$.
	Let $\Psi_v$ be the event that $(\delta_{x_{(1)}}-d_G(x_{(1)},v))- (\delta_{x_{(2)}}-d_G(x_{(2)},v))<4\frac\Delta\beta$. Recall that the event that the ball of radius $2\frac\Delta\beta$ around $v$ is cut contained in $\Psi_v$. Following the analysis of \Cref{thm:padded}, 
	$\Pr\left[\Psi_{v}\right]\le1-e^{-O(\ddim\cdot\frac{1}{\beta})}=1-2^{-\ddim/t}$, where the equality holds for an appropriate choice of $\alpha$.
	
	Let $x_v$ be the closest center to $v$. It holds that $\delta_{x_v}-d_G(x_v,v)\ge-\Delta$, while for every center $x$ at distance larger that $3\Delta$ it holds that $\delta_{x}-d_G(x,v)\le-2\Delta$. Therefore $\Psi_{v}$ depends only on centers at distance at most $3\Delta$. In particular, by triangle inequality, if $v$ and $u$ are farther away than $6\Delta$, $\Psi_{v}$ and $\Psi_{u}$ are independent.	
	
	We take $m=\alpha_m\cdot 2^{\frac{\ddim}{t}}\cdot\ddim\cdot\log t$ independent partitions of $X$ using  \Cref{cor:PaddedDoubling}, for $\alpha_m=\Theta(1)$ to be determined later. Denote by $\Psi^i_v$ the event representing $\Psi_v$ in the $i$'th partition. 
	Let $\Phi_{v}=\bigwedge_{i=1}^{m}\Psi_{v}^{i}$ be the event that $v$ ``failed'' in all the partitions. It holds that 
	\[
	\Pr[\Phi_{v}]\le\left(1-2^{-\ddim/t}\right)^{m}\le e^{-2^{-\ddim/t}\cdot m}=e^{-\alpha_{m}\cdot\ddim\cdot\log t}~.
	\]
	Note that if $\Phi_v$ did not occurred, then the ball of radius $2\frac\Delta\beta$ around $v$ is contained in a single cluster in at least one partition.
	
	Let $Y$ be an $\frac\Delta\beta$-net of $X$. 
	Set $\mathcal{A}=\{\Phi_v\}_{v\in Y}$, to be a set of events determined by $\{\delta^i_x\}_{x\in N,1\le i\le m}$ ($\delta^i_x$ denotes $\delta_x$ in the $i$'th partition).
	Each event $\Phi_v$ might depend only on events $\Phi_u$ corresponding to vertices $u$ at distance at most $6\Delta$ from $v$. By \Cref{lem:doubling_packing}, $\Phi_v$ is independent of all, but $|\Gamma(\Phi_v)|\le\left(\frac{12\Delta}{\Delta/\beta}\right)^{O(\ddim)}=2^{O(\ddim\cdot\log t)}$ events.
	For every $\Phi_v\in\mathcal{A}$, set $x(\Phi_v)=p= 2^{-O(\ddim\cdot\log t)}$, such that $\max_{v\in Y}|\Gamma(v)|\le\frac{1}{2p}$. Then, for every $\Phi_v\in\mathcal{A}$ it holds that,
	\[
	x(\Phi_{v})\cdot\Pi_{B\in\Gamma(\Phi_{v})}(1-x(B))=p\cdot\left(1-p\right)^{|\Gamma(\Phi_{v})|}\ge p\cdot\left(1-p\right)^{\frac{1}{2p}}\ge\frac{p}{e}
	\ge\Pr(\Phi_{v})~,
	\]
	where the last inequality holds for large enough $\alpha_m$. 
	By \Cref{lem:lovasz} we can find an assignment to $\{\delta^i_x\}_{x\in N,1\le i\le m}$ such that none of the events $\{\Phi_v\}_{v\in Y}$ occurred in $\sum_{A\in\mathcal{A}}\frac{x(A)}{1-x(A)}\cdot\poly\left(|\mathcal{A}|+|\mathcal{P}|\right)=\frac{p}{1-p}\cdot\poly\left(n\right)=\poly\left(n\right)$ expected time.
	Under this assumption, we argue that our sparse cover has the padding property.
	Consider some vertex $v\in V$. There is a net point $u\in Y$ at distance at most $\frac\Delta\beta$ from $v$. As the event $\Phi_u$ did not occur, there is some cluster $C$ in the cover in which $u$ is padded. In particular 
	$B_G(v,\gamma\Delta)\subseteq B_G(u,2\gamma\Delta)\subseteq C$ as required.
	
\end{proof}

\section{Minor Free Graphs}\label{sec:MinorFree}
Our clustering algorithm is based on the clustering algorithm of \cite{AGGNT19}, with a small modification.
The clustering of \cite{AGGNT19} has two steps. In the first step the graph is partitioned into $r$-\emph{Core} clusters (see \Cref{def:rCore} below). While $r$-core clusters do not have bounded diameter, they do have a simple geometric structure. Moreover, this clustering also has the padding property for small balls. In the second step, each $r$-core cluster is partitioned into bounded diameter sub-clusters using \Cref{thm:padded}.

\begin{definition}[$r$-Core]\label{def:rCore}
	Given a weighted graph $G=(V,E,w)$, 
	an $r$-\emph{core} with radius $\Delta$ is a set of at most $r$ shortest paths $\mathcal{I}_1,\dots,\mathcal{I}_{r'}$ such that for every $v\in V$, $d_G(v,\cup_i \mathcal{I}_i)\le\Delta$.
	
	Given a cluster $C\subseteq G$, we say that $C$ is an $r$-core cluster with radius $\Delta$, if $G[C]$ has an $r$-\emph{core} with radius $\Delta$. 
	Given a partition $\mathcal{P}$ of $G$, we say that it is an  $r$-core partition with radius $\Delta$ if each cluster $C\in \mathcal{P}$, is an $r$-core cluster with radius $\Delta$.
\end{definition}

The following theorem was proved implicitly in \cite{AGGNT19}.
\begin{lemma}[Core Clustering \cite{AGGNT19}]\label{lem:coreClustering}
	Given a weighted graph $G=(V,E,w)$ that excludes $K_r$ as a minor and a parameter $\Delta>0$, there is a distribution $\mathcal{D}$ over $r$-core partitions with radius $\Delta$, such that for every vertex $v\in V$ and $\gamma\in(0,\Omega(\frac1r))$ it holds that
	\[
	\Pr
	\left[B_{G}(v,\gamma\Delta)\subseteq P(v)\right]\ge e^{-O(r\cdot\gamma)}~.
	\]
\end{lemma}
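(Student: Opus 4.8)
The plan is to follow the two-phase scheme of \cite{AGGNT14}, but I only need the first phase here, since the lemma statement is exactly the "core partition" phase. The overall idea is the classical Klein–Plotkin–Rao (KPR) style chopping: one carves out a constant number (about $r$) of "shortest path regions" per recursive level, and a $K_r$-minor-free hypothesis forces the recursion to terminate after $O(r)$ rounds, at which point the remaining pieces are within distance $\Delta$ of a union of $O(r)$ shortest paths, i.e.\ $r$-core clusters. More precisely, I would run the Fakcharoenphol–Talwar / AGGNT variant: maintain a partition; in each round, within each current piece $C$, pick a vertex (or a BFS layer structure), grow a region around a shortest path $\mathcal{I}$ in the current induced graph, and delete an annulus of random width drawn from a truncated exponential (or geometric) distribution of scale $\Theta(\Delta/r)$, which separates an "inside" region from the rest. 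After $r-1$ rounds, the fact that $G$ is $K_r$-minor-free guarantees no further KPR cut is needed: every remaining piece has an $r$-core with radius $\Delta$, because each of the $\le r$ rounds contributed one shortest path, and each vertex is within $\Delta$ of one of them.

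The key steps, in order, would be: (1) State the single-scale recursive chopping procedure with cut-width parameter $\delta \sim \Texp(\Theta(r))$ scaled to $\Delta$, tracking for each surviving cluster the (at most one per level) shortest path around which the region was grown; (2) Prove the structural claim that after $r$ levels every cluster is $r$-core with radius $\Delta$ — this is where $K_r$-minor-freeness enters: if a piece needed an $(r{+}1)$-st nontrivial cut, one contracts each of the $r$ previously-removed path-regions and the current region to obtain a $K_{r+1}$ minor (or $K_r$, depending on the exact bookkeeping), contradiction; (3) Do the padding analysis: for a fixed vertex $v$ and radius $\gamma\Delta$, the ball $B_G(v,\gamma\Delta)$ is cut only if in some one of the $O(r)$ rounds the random annulus boundary falls within $\gamma\Delta$ of $v$; by the memoryless-type property of the (truncated) exponential with scale $\Theta(\Delta/r)$, each round cuts the ball with probability $O(\gamma r)$, or rather the complementary "survives this round" probability is at least $e^{-O(\gamma r)/r \cdot (\text{stuff})}$, and multiplying over $O(r)$ independent rounds yields $e^{-O(r\gamma)}$ overall; (4) Conclude that the resulting distribution $\mathcal{D}$ over $r$-core partitions with radius $\Delta$ has the claimed padding bound for all $\gamma \in (0,\Omega(1/r))$, the upper restriction on $\gamma$ being exactly what makes the per-round truncated-exponential bound valid.

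I expect the main obstacle to be step (2), the structural argument that $O(r)$ rounds of chopping suffice to reach $r$-core clusters — getting the minor-construction bookkeeping exactly right (which contracted regions form branch sets, why they are pairwise adjacent, and why one more necessary cut would yield $K_r$) is the delicate part, and it is precisely the place where \cite{AGGNT14} (following \cite{KPR93,FT03}) did the real work. A secondary subtlety is ensuring the chopping is done with respect to the \emph{induced} shortest-path metric in each current piece so that the final clusters are $r$-core in the \emph{strong} sense ($G[C]$ has the core), which matters for how this lemma is used downstream; one must argue that distances only grow under taking induced subgraphs and that the grown regions are themselves connected, so the ``shortest paths'' recorded remain shortest paths of $G[C]$ up to the relevant scale. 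Since the lemma is attributed as implicit in \cite{AGGNT14}, I would present this as extracting their construction and re-deriving the two bounds (the $r$-core structure and the $e^{-O(r\gamma)}$ padding), rather than claiming novelty.
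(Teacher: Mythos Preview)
Your proposal describes the recursive KPR/FT-style chopping (fixed number $O(r)$ of \emph{rounds}, one shortest path per round per piece), but that is not the algorithm behind this lemma. The construction the paper extracts from \cite{AGGNT14} is an \emph{iterative} cop-decomposition: clusters $S_1,S_2,\dots$ are carved one at a time; at step $i$ one picks a connected component $C_i$ of $G\setminus\bigcup_{j<i}S_j$, a root $x_i\in C_i$, and for every previously created cluster $S_j$ adjacent to $C_i$ a shortest path in $G_i$ from $x_i$ to a portal vertex of $S_j$. The union $T_i$ of these paths is the skeleton, and $S_i=B_{G_i}(T_i,R_i\Delta)$ with $R_i\sim\Texp_{[0,1]}(2r)$. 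The minor-freeness enters not as ``recursion terminates after $r$ rounds'' but as the bound $|\mathcal{K}_{|C_i}|\le r-2$ on the number of earlier clusters adjacent to $C_i$ (one shows inductively that any two neighbours of $C_i$ are themselves adjacent, so $C_i$ together with $r-1$ neighbours would contract to $K_r$). This is what makes every $T_i$ a union of at most $r-1$ shortest paths, hence every $S_i$ an $r$-core cluster with radius $\Delta$, immediately---no terminal ``leftover'' pieces to analyse. The padding is then obtained by bounding the expected number of skeletons that threaten a fixed ball and feeding this into a truncated-exponential cut bound, not by multiplying over rounds.

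Your sketch also has an internal inconsistency that matters: you set the annulus scale to $\Theta(\Delta/r)$, which makes the per-round cut probability $\Theta(\gamma\Delta/(\Delta/r))=\Theta(\gamma r)$, and over $r$ rounds that yields $e^{-O(r^2\gamma)}$, not $e^{-O(r\gamma)}$. (Indeed, this is exactly why the earlier strong-diameter argument in \cite{AGGNT14} got $O(r^2)$.) To get $e^{-O(r\gamma)}$ from a round-based scheme you would need scale $\Theta(\Delta)$ per round, but then your step~(2) no longer gives what you claim: the KPR contraction argument bounds the \emph{diameter} of surviving pieces, it does not place every vertex within $\Delta$ of the $r$ recorded paths, so you do not obtain $r$-core clusters in the sense of \Cref{def:rCore}. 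The cop-decomposition avoids this tension because the $r$-core structure is built into each cluster at the moment of creation.
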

We provide a proof of \Cref{lem:coreClustering} that differs from \cite{AGGNT19}, and arguably simplifies it. See  \Cref{subsec:coreClustering}, and the discussion therein.
We now proceed to proving our main \Cref{thm:StrongMinorFree}.
Our clustering algorithm will be executed in two steps: first we partition the graph into $r$-core clusters (\Cref{lem:coreClustering}) and then we partition each $r$-core cluster using \Cref{thm:padded}.

\begin{hisnote*}
\cite{AGGNT19} presented two different algorithms for strong and weak padded decompositions. Each of these algorithms consist	ed of two steps. 
For weak decompositions, essentially they first partitioning the graph into $r$-core clusters. Secondly, instead of partition further each cluster, they pick a net from the $r$-cores in all the clusters, and iteratively grow balls around net points, ending with weak diameter guarantee.
For strong decompositions, they partition the graph into $1$-core clusters (instead of $r$-core), ending with a probability of only $ e^{-O(r^2\cdot\gamma)}$ for a vertex $x$ to be $\gamma$-padded.
\end{hisnote*}
\subsection{Strong Padded Partitions for $K_r$ Minor Free Graphs}\label{subsec:MinorFreeClustering}

\begin{lemma}\label{lem:coreToPadded}
	Let $G=(V,E,w)$ be a weighted graph that has an $r$-core with radius $\Delta$. Then $G$ admits a strong $\left(O(\log r),\frac{1}{16},8\Delta\right)$-padded decomposition.
\end{lemma}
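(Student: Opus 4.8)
The plan is to reduce \Cref{lem:coreToPadded} to \Cref{thm:padded} by exhibiting a suitable center set $N$ inside $G$. Recall that an $r$-core consists of at most $r$ shortest paths $\mathcal{I}_1,\dots,\mathcal{I}_{r'}$ (with $r'\le r$) such that every vertex of $G$ is within distance $\Delta$ of $\cup_i\mathcal{I}_i$. First I would build $N$ by taking, for each path $\mathcal{I}_j$, a $\Delta$-net $N_j$ of the metric space $(\mathcal{I}_j, d_G)$ restricted to that path, and set $N=\cup_j N_j$. Since each $\mathcal{I}_j$ is a shortest path in $G$, it is isometric to an interval of the real line, so a $\Delta$-net of it is just an arithmetic-progression-like subset with consecutive net points at distance in $(\Delta, 2\Delta]$ and hence at most $\lceil \diam_G(\mathcal{I}_j)/\Delta\rceil + 1$ points per path — though we will only need a \emph{local} packing bound, not a global count.

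Next I would verify the two hypotheses of \Cref{thm:padded} for this $N$ with diameter parameter $\Delta$ (so the conclusion will be a strong $(O(\ln\tau), \tfrac1{16}, 4\Delta)$-decomposition, and we just need $\tau = O(r)$ and absorb the constant $4$ into the $\Delta$ as the lemma statement permits — actually the statement asks for diameter $\Delta$, so I would instead run \Cref{thm:padded} with parameter $\Delta/4$ and a $(\Delta/4)$-net of each path). For \textsc{Covering}: any $v$ is within $\Delta/4$ of some path $\mathcal{I}_j$, say of a point $p\in\mathcal{I}_j$, and $p$ is within $\Delta/4$ of a net point $x\in N_j$, so $d_G(v,x)\le \Delta/2 \le \Delta$. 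Wait — I need $d_G(v,x)\le \Delta'$ where $\Delta'$ is the parameter fed to \Cref{thm:padded}; the cleanest bookkeeping is to feed \Cref{thm:padded} the parameter $\Delta$ directly and use a net at scale $\Delta$ on a core that we may assume has radius $\Delta$ — but then covering gives $d_G(v,x)\le 2\Delta$, which is too big. So the right move is: apply the hypothesis with core radius $\Delta$, but build a $c\Delta$-net for a small constant $c$ (say $c=\tfrac13$) on each path, giving $d_G(v,x)\le \Delta + c\Delta \le \tfrac43\Delta$; that still isn't $\le$ the scale parameter. The honest fix is to run \Cref{thm:padded} at scale $\Theta(\Delta)$ and observe the final diameter $4\cdot\Theta(\Delta)$ can be made $\le\Delta$ by choosing the constant, i.e. feed scale $\Delta/8$, use a $(\Delta/8)$-net on each path, get covering distance $\le \Delta/8 + \Delta/8$... no: a vertex is within $\Delta$ of the core, not within $\Delta/8$. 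I would therefore instead first replace the core radius: since $v$ is within $\Delta$ of $\cup\mathcal{I}_i$, I take the center set to be a $\Delta$-net $N_j$ of each path \emph{together with} enough points so that in fact every vertex has a net point within $2\Delta$ — and feed \Cref{thm:padded} the scale $2\Delta$, concluding a $(O(\ln\tau),\tfrac1{16}, 8\Delta)$-decomposition, which after rescaling $\Delta\mapsto\Delta/8$ is the claimed $(O(\log r),\Omega(1),\Delta)$-decomposition. The constants are routine; the only real content is the packing bound.

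For \textsc{Packing} I must show $|B_G(v,3\Delta')\cap N|\le \tau$ with $\tau=O(r)$, where $\Delta'$ is whatever scale I settled on (a constant times $\Delta$). The key step: fix $v$ and a path $\mathcal{I}_j$. The net points of $N_j$ lying in $B_G(v, R)$ all lie on the shortest path $\mathcal{I}_j$, which is isometric to a line segment, and they are pairwise more than $\Delta$ apart (net property); moreover they all lie within $d_G$-distance $R$ of $v$, hence the portion of $\mathcal{I}_j$ they occupy has $d_G$-length at most $2R$ (any two of them are at distance $\le 2R$ along the line), so there are at most $2R/\Delta + 1 = O(1)$ of them when $R=O(\Delta)$. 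Summing over the at most $r$ paths gives $|B_G(v,R)\cap N| = O(r)$, i.e. $\tau=O(r)$. Then \Cref{thm:padded} yields padding parameter $O(\ln \tau)=O(\log r)$, and after the constant-factor rescaling of $\Delta$ described above we obtain a strong $(O(\log r),\Omega(1),\Delta)$-padded decomposition, completing the proof.

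The main obstacle is purely the constant-factor bookkeeping between ``within $\Delta$ of the core'' (covering hypothesis of \Cref{thm:padded} wants within the feeding-scale), the net resolution, and the $4\times$ diameter blowup in \Cref{thm:padded} versus the clean ``$\Delta$'' demanded in \Cref{lem:coreToPadded}; the clean way to handle it is to prove the lemma with diameter $O(\Delta)$ and rescale, or equivalently to start from a core of radius $\Delta/c$ for a suitable constant $c$. The genuinely mathematical point — that a shortest path contributes only $O(1)$ net points to any ball of radius $O(\Delta)$ because it is isometric to a line — is short, and the factor $r$ comes from there being $\le r$ paths.
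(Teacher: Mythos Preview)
Your approach is correct and essentially identical to the paper's: take a fine net $N_j$ on each shortest path $\mathcal{I}_j$, set $N=\cup_j N_j$, verify \textsc{Covering} by triangle inequality and \textsc{Packing} by the fact that a shortest path is isometric to a segment (hence contributes $O(1)$ net points to any $O(\Delta)$-ball), and invoke \Cref{thm:padded} with $\tau=O(r)$. The constant-bookkeeping tangle you worry about is real --- the paper's own proof uses a $\tfrac{\Delta}{8}$-net and silently treats the core radius as $\tfrac{\Delta}{8}$ rather than $\Delta$ when checking covering, so the clean ``diameter $\Delta$'' in the statement should be read as ``diameter $O(\Delta)$'' (and then rescaled), exactly as you suggest.
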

\begin{proof}
	Let $\mathcal{I}_1,\mathcal{I}_2,\dots,\mathcal{I}_{r'}$ be the $r$-core of $G$. For each $i$, let $N_i$ be a $\Delta$-net of $\mathcal{I}_i$. Set $N=\cup_i N_i$.
	Every vertex $v\in V$ has some vertex in $N$ at distance at most $2\cdot\Delta$. Indeed, by definition of $r$-core, there is $x\in \mathcal{I}_i$ such that $d_{G}(v,x)\le\Delta$. Furthermore, there is a net point $y\in N_i$ at distance at most $\Delta$ from $x$. By triangle inequality $d_{G}(v,y)\le2\cdot\Delta$.
	As $\mathcal{I}_i$ is a shortest path and $N_i$ is a $\Delta$-net, there are at most $O(1)$ net points at distance $6\Delta$ from $v$ in $N_i$. We conclude that in $N$ there are at most $O(r)$ net points at distance $6\Delta$ from $v$. The lemma now follows by \Cref{thm:padded}.
\end{proof}
\begin{theorem}\label{thm:StrongMinorFree}
	Let $G=(V,E,w)$ be a weighted graph that excludes $K_r$ as a minor. Then $G$ admits a strong $\left(O(r),\Omega(\frac1r)\right)$-padded decomposition scheme.
\end{theorem}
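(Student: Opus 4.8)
The plan is to combine the two-step scheme that the paper has already set up: core clustering (\Cref{lem:coreClustering}) followed by the intra-cluster decomposition of \Cref{lem:coreToPadded}, which in turn rests on \Cref{thm:padded}. The goal is to produce, for every $\Delta>0$, a strong $\left(O(r),\Omega(\frac1r),\Delta\right)$-padded decomposition of $G$, and then declare the scheme by letting $\Delta$ range.

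First I would fix the target diameter $\Delta$ and invoke \Cref{lem:coreClustering} with radius parameter $\Delta_0 := \Delta/c$ for a suitable constant $c$ (to absorb the constant-factor blow-up in strong diameter coming from the second step). This samples an $r$-core partition $\mathcal{P}_0$ such that every $v$ is $\gamma$-padded with probability $\ge e^{-O(r\gamma)}$ for all $\gamma\in(0,\Omega(1/r))$. Second, independently for each cluster $C\in\mathcal{P}_0$, I would apply \Cref{lem:coreToPadded} to the induced graph $G[C]$ (which by definition has an $r$-core with radius $\Delta_0$) to obtain a strong $\left(O(\log r),\Omega(1),\Delta_0\right)$-padded decomposition $\mathcal{P}_C$ of $G[C]$. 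The final partition $\mathcal{P}$ is the common refinement: a vertex $v$ lands in the piece of $\mathcal{P}_{P_0(v)}$ that contains it. Since each $\mathcal{P}_C$ is strongly $\Delta_0$-bounded inside $G[C]$, and $G[P(v)]\subseteq G[C]$ for the relevant $C$, the strong diameter of every final cluster is at most $\Delta_0\le\Delta$ — here is where the constant $c$ matters, and I would just pick $c$ so the bound is $\le\Delta$ (e.g.\ the $O(\log r)$ decomposition of \Cref{lem:coreToPadded} already gives strong diameter exactly $\Delta_0$, so $c=1$ suffices, but I would keep a constant in reserve in case the net/triangle-inequality constants in \Cref{lem:coreToPadded} need it).

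The padding analysis is the crux. Consider a vertex $v$ and a radius $\gamma\Delta$ with $\gamma\le\delta$ for some $\delta=\Omega(1/r)$ to be fixed. The event that $B_G(v,\gamma\Delta)\subseteq P(v)$ is implied by the conjunction of two independent events: (i) $B_G(v,\gamma\Delta)\subseteq P_0(v)=:C$ in the core partition, and (ii) $B_{G[C]}(v,\gamma\Delta)\subseteq \mathcal{P}_C(v)$ in the intra-cluster partition of $G[C]$. Crucially, once (i) holds, $B_G(v,\gamma\Delta)=B_{G[C]}(v,\gamma\Delta)$ (the whole metric ball sits inside $C$, so shortest paths realizing it stay in $C$), so (ii) applied to this ball finishes the job. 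For (i), \Cref{lem:coreClustering} with radius $\Delta_0$ gives probability $\ge e^{-O(r\cdot\gamma\Delta/\Delta_0)}=e^{-O(r\gamma)}$, valid for $\gamma\Delta/\Delta_0\in(0,\Omega(1/r))$, i.e.\ $\gamma\in(0,\Omega(1/r))$. For (ii), \Cref{lem:coreToPadded} gives probability $\ge e^{-O(\log r)\cdot \gamma\Delta/\Delta_0}=e^{-O(\gamma\log r)}\ge e^{-O(r\gamma)}$, valid for $\gamma\Delta/\Delta_0\le\Omega(1)$, i.e.\ $\gamma\le\Omega(1)$ — a weaker constraint than (i)'s. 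By independence of the two sampling stages, $\Pr[B_G(v,\gamma\Delta)\subseteq P(v)]\ge e^{-O(r\gamma)}\cdot e^{-O(\gamma\log r)}=e^{-O(r\gamma)}$, as required, for all $\gamma$ up to $\delta=\Omega(1/r)$.

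The main obstacle — and the only place needing genuine care rather than bookkeeping — is verifying that conditioning on event (i) really does turn the global ball $B_G(v,\gamma\Delta)$ into the induced-graph ball $B_{G[C]}(v,\gamma\Delta)$, so that the strong-diameter guarantee of \Cref{lem:coreToPadded} (which is stated for $G[C]$) actually certifies padding with respect to $d_G$. This is true because if every vertex within $d_G$-distance $\gamma\Delta$ of $v$ lies in $C$, then in particular every vertex on a shortest $v$–$u$ path (for $u$ in the ball) lies in $C$, hence $d_{G[C]}(v,u)=d_G(v,u)$; so the two balls coincide and padding in $G[C]$ implies padding in $G$. The remaining items — independence of the two random stages (immediate, as they use fresh randomness), polynomial-time sampling (both ingredients are efficient), and the arithmetic $O(\log r)\le O(r)$ folding the two exponents into one — are routine. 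I would close by noting $\delta=\Omega(1/r)$ is inherited entirely from \Cref{lem:coreClustering}, matching the theorem statement.
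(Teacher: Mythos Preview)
Your proposal is correct and follows essentially the same two-step approach as the paper: apply \Cref{lem:coreClustering} to get an $r$-core partition, then refine each cluster via \Cref{lem:coreToPadded}, and combine the padding probabilities using the key observation that $B_G(v,\gamma\Delta)=B_{G[C]}(v,\gamma\Delta)$ once the ball lies inside $C$. The only cosmetic differences are that the paper dispenses with the auxiliary constant $c$ (taking $\Delta_0=\Delta$ directly, since \Cref{lem:coreToPadded} already returns strong diameter $\Delta$) and phrases the probability combination as a conditional expectation rather than ``independence of stages''---your phrasing is slightly loose since event~(ii) is only defined relative to the outcome of stage one, but the substance (fresh randomness in stage two, hence $\Pr[(i)\wedge(ii)]=\Pr[(i)]\cdot\Pr[(ii)\mid(i)]$) is exactly right.
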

\begin{proof}
	Let $\Delta>0$ be some parameter.
	We construct the decomposition in two steps. First we sample an $r$-core partition $\mathcal{P}$ with radius parameter $\frac\Delta8$ using \Cref{lem:coreClustering}.
	Next, for every cluster $C\in \mathcal{P}$, we create a partition $\mathcal{P}_C$ using \Cref{lem:coreToPadded}. The final partition is simply $\cup_{C\in\mathcal{P}}\mathcal{P}_C$, the union of all the clusters in all the created partitions.
	It is straightforward that the created partition has strong diameter $\Delta$.
	To analyze the padding, consider a vertex $v\in V$ and parameter $0<\gamma \le \Omega(\frac1r)$.
	Denote by $C_v$ the cluster containing $v$ in $\mathcal{P}$, and by $P(v)$ the cluster of $v$ in the final partition. Then, 
	\begin{align*}
	\Pr\left[B_{G}(v,\gamma\Delta)\subseteq P(v)\right] & =\Pr\left[B_{G}(v,\gamma\Delta)\subseteq P(v)\mid B_{G}(v,\gamma\Delta)\subseteq C_{v}\right]\cdot\Pr\left[B_{G}(v,\gamma\Delta)\subseteq C_{v}\right]\\
	& \ge e^{-O(\gamma\cdot r)}\cdot e^{-O(\gamma\cdot\log r)}=e^{-O(\gamma\cdot r)}~,
	\end{align*}
	where we used the fact that conditioning on $B_{G}(v,\gamma\Delta)\subseteq C_{v}$, it holds that $B_{G}(v,\gamma\Delta)=B_{G[ C_{v}]}(v,\gamma\Delta)$.
\end{proof}

\subsection{The Core Clustering Algorithm: Proof of \Cref{lem:coreClustering}}\label{subsec:coreClustering}
In this section we prove \Cref{lem:coreClustering}, which was implicitly proved in \cite{AGGNT19}. Our algorithm and proof here follow similar lines to those in \cite{AGGNT19}. However, we made some small modifications that made the proof simpler and easier the follow (at least in the author's subjective opinion). 
Specifically, we pick the radii in \Cref{alg:coreClustering} using real exponential distribution, as opposed to truncated exponential distribution in \cite{AGGNT19}. This change makes the padding property to follow almost immediately, but more importantly the potential function argument is not as vague as in the original proof. 
The use of unbounded distribution however has the drawback that the resulting clusters might have radius (w.r.t. the $r$-core definition) larger than $\Delta$.
We deal with this issue by simply recursively running the algorithm on each such cluster. 

Given two disjoint subsets $A,B\subseteq V$, we write $A \sim B$ if there exists an edge from a vertex in $A$ to some vertex in $B$. 
We denote the partition created by the algorithm by $\mathcal{S}$, and the clusters by $\{S_1,S_2,\dots\}$. 
The clusters are constructed iteratively. Initially $G_1=G$. At step $i$, $G_i=G\setminus\cup_{j=1}^{i-1} S_j$.
For a connected component
$C\in G_i$, let ${\cal K}_{|C}=\{S_j \mid j<i \wedge C\sim
S_j\}$ be the set of previously created clusters with a neighbor in $C_i$. 
To create $S_i$, pick arbitrary connected component $C_i$ in $G_i$, and a vertex $x_i\in C_i$. 
For every neighboring cluster $S_j\in {\cal K}_{|C_i}$, pick an arbitrary vertex $u_j\in C_i$ such that $u_j$ has a neighbor in $S_j$. 
For each such $u_j$, let $\mathcal{I}_j$ be an (arbitrary) shortest path in $G_i$ from $x_i$ to $u_j$. Let $T_i$ be the tree created by the union of $\{\mathcal{I}_j\}_{S_j\in {\cal K}_{|C_i}}$.\footnote{Note that there is always a way to pick $\{\mathcal{I}_j\}_{S_j\in {\cal K}_{|C_i}}$ such that $T_i$ will be a tree.}
Sample a radius parameter $R_i$ using exponential distribution $\exp(1)$ with parameter $1$ (the density function is $f(x)=e^{-x}$).
The cluster $S_i$ is defined as $B_{G_i}(T_i,R_i\Delta)$, the set of all vertices at distance at most $R_i\Delta$ from $T_i$ w.r.t. $d_{G_i}$. 
This finishes the construction of $S_i$. The algorithm halts when all the vertices are clustered.
The pseudo-code is presented in \Cref{alg:coreClustering}. See also \Cref{fig:Cops} for illustration of the algorithm.

\begin{algorithm}
	\caption{\texttt{Core-Partition}($G$,$\Delta$,$r$)}\label{alg:coreClustering}
	\begin{algorithmic}[1]
		\STATE Let $G_1 \leftarrow G$, $i\leftarrow 1$.
		\STATE Let $\mathcal{S} \leftarrow \emptyset$.
		\WHILE {$G_i$ is non-empty}
		\STATE Let $C_i$ be an arbitrary connected component of $G_i$.\label{line:PickCluster}
		\STATE Pick arbitrary $x_i \in C_i$. \label{alg:CenterChoose}
		For each $S_j\in {\cal K}_{|C_i}$, let $u_j\in C_i$ be some vertex with a neighbor in $S_j$.
		\STATE Let $T_i$ be a tree rooted at $x_i$ and consisting of shortest paths towards $\{u_j\mid S_j\in {\cal K}_{|C_i}\}$.
		\STATE Sample $R_i\sim\exp(1)$.
		\STATE Let $S_i \leftarrow B_{G_i}(T_i,R_i\Delta)$.
		\STATE Add $S_i$ to $\mathcal{S}$.
		\STATE $G_{i+1} \leftarrow G_i \setminus S_i$.
		\STATE $i\leftarrow i+1$.
		\ENDWHILE
		\RETURN $\mathcal{S}$.
	\end{algorithmic}
\end{algorithm}

\begin{figure}[h]
	\centering{\includegraphics[scale=0.8]{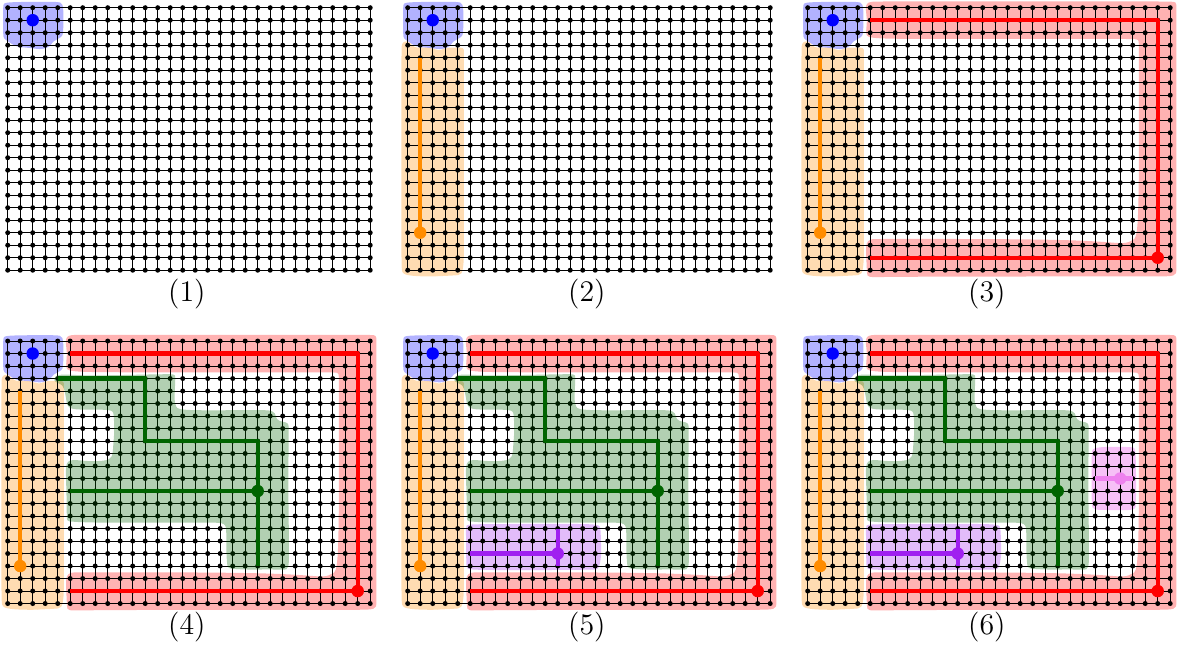}} 
	\caption{\label{fig:Cops}\small \it 
		The figure illustrates the $6$ first steps in \Cref{alg:coreClustering}. Here $G$ is the (weighted) grid graph. Note that $G$ excludes $K_5$ as a minor.
		In step $(4)$, $G_4$ is the graph induced by all the vertices not colored in blue, orange or red. $G_4$ has a single connected component $C_4$. The green vertex defined as $x_4$. ${\cal K}_{|C_i}$ consist of $3$ clusters $S_1,S_2,S_3$ colored respectively by blue, orange and red. $T_4$ is a tree rooted in $x_4$ colored in bold green, that consist of 3 shortest paths. Each of $S_1,S_2,S_3$ has a vertex of $T_4$ as a neighbor.
		$R_4$ is chosen according to $\exp(1)$. The new cluster $S_4$, colored in green, consist of all vertices in $C_4$ at distance at most $R_4\Delta$ from $T_4$ w.r.t. $d_{G_4}$.
	}
\end{figure}

Provided that the graph $G$ excludes $K_r$ as a minor, for every $C_i$ it holds that $\left|{\cal K}_{|C_i}\right|\le r-2$. Indeed, by induction for every $S_{j},S_{j'}\in {\cal K}_{|C_i}$, there is an edge between $S_{j}$ to $S_{j'}$.\footnote{To see this note that there is a path between $u_j$ to $u_{j'}$ in $C_i$. Therefore, when creating $S_{j'}$ (assuming $j<j'$), it was the case that $S_j\in{\cal K}_{|C_{j'}}$. In particular, $T_{j'}$ contains a vertex with neighbor in $S_j$.}
Assume for contradiction that $\left|{\cal K}_{|C_i}\right|\ge r-1$. By contracting all the internal edges in $C_i$ and in the clusters in ${\cal K}_{|C_i}$ we will obtain $K_r$ as a minor, a contradiction.
It follows that for every $i$, $T_i$ is an $(r-2)$-core of $S_i$, with radius distributed according to exponential distribution.

Our final algorithm will be as follows: we execute \Cref{alg:coreClustering} to obtain a partition $\mathcal{S}$. 
Set $\boldsymbol{\alpha}=7r$.
For every cluster $S_i\in\mathcal{S}$ such that $R_i\ge 2\alpha$, we will (recursively) execute \Cref{alg:coreClustering} on the induced graph $G[S_i]$ to obtain an $r$-core partition with radius $2\alpha$. The final clustering $\tilde{\mathcal{S}}$ will consist of the union of all the clusters with radii at most $2\alpha$ in $\mathcal{S}$, and the clusters returned by the all recursive calls.
Clearly, $\tilde{\mathcal{S}}$ is an $r$-core partition with radius $2\alpha=O(r)\cdot\Delta$.

%

For the sake of analysis, instead of discrete graphs and functions, we shall work with their continuous counterparts (see a similar approach in \cite{RR98}).
Formally, we will associate our graph with a one-dimensional simplicial complex
endowed with a metric. 
Each edge $e=(v,u)$ of weight $w(e)$ will be an interval of length $w(e)$ equipped with the standard line metric, where one endpoint identified with the vertex $v$ and the other with $u$. Naturally, the edge metrics  induce a global metric on the
entire structure. Note that the original distance between every pair of vertices is preserved (in particular this holds in every induced subgraph).
During the execution of the algorithm we can assume that in all the arbitrary choices of centers $x_i$ in \lineref{alg:CenterChoose}, the algorithm chooses a real vertex from $V$ (and once no such vertices remain, the algorithm halts). Furthermore, when choosing $u_j\in C_i$ in \lineref{alg:CenterChoose}, we will simply choose a real vertex in $V$ that has a geodesic path towards a vertex in $S_j$.
All in all, the continuous interpretation makes no difference on the execution of the algorithm, or the resulting partition, but will make the next definitions simpler and more natural.

The heart of our analysis will be to show that an arbitrary vertex $z$ belong to a cluster with large radius with probability at most $\frac13$.
Let $\mathcal{J}_{z}=\left\{T_{i}\mid d_{G_{i}}(z,S_{i})\le\alpha\Delta\right\}$ be the cores of the clusters at distance at most $\alpha\Delta$ from $z$. Alternatively, $\mathcal{J}_{z}$ consist of all the cores $T_i$ such that $d_{G_i}(T_i,z)\le R_i+\alpha\Delta$.
Note that $T_{i}$ can join $\mathcal{J}_{z}$ only if $T_i$ is in the connected component of $z$, 
and $R_{i}$ is large enough. In particular $z$ will join a cluster from $\mathcal{J}_{z}$.
The following lemma is the technical heart of our proof.

\begin{lemma}\label{lem:ExpectedThretaners}
	$\mathbb{E}\left[\left|\mathcal{J}_{z}\right|\right]\le \frac{2r}{\alpha+r}\cdot e^{\alpha}$. 
\end{lemma}
\begin{proof}	
We will think of growing the clusters using the exponential distribution as an iterative continues process. We define a potential function to measure the ``progress'' made twoards clustering $z$.
Consider step $i$ where $z$ is yet unclustered, and let $\widetilde{\mathcal{K}}_{C_{i}}=\left\{ S_{j}\in\mathcal{K}_{C_{i}}\mid d_{G_{i}\cup S_{j}}(z,S_{j})\le\alpha\Delta\right\}$
 be the subset of $\mathcal{K}_{C_{i}}$ clusters at distance at
most $\alpha\Delta$ from $z$.\footnote{Note that $\mathcal{J}_z$ contains (the core of) clusters that been at distance at most $\alpha\Delta$ at the time of their creation, while $\widetilde{\mathcal{K}}_{C_{i}}$ contains clusters that are at distance at most $\alpha\Delta$ in the current connected component $C_i$.}
Suppose that $\widetilde{\mathcal{K}}_{\mid C_{i}}=\left\{ S_{i_{1}},\dots,S_{i_{l}}\right\} $,
and note that necessarily $l\le r-2$. 
Set $\boldsymbol{x}:=(x_{1},\dots,x_{l})$
where $x_{j}=\frac{d_{G_{i}\cup S_{i_{j}}}(z,S_{i_{j}})}{\Delta}$,
and 
\[
\Phi(\boldsymbol{x})=\sum_{j=1}^{l}e^{- x_{j}}
\]
For every vector \textbf{$\boldsymbol{x}$ }containing $0$ or a negative value  set $\Phi(\boldsymbol{x})=2r$ ($\Phi$ is defined on general vectors).
While $z$ is unclustered, the value $\Phi(x)$ is upper bounded by $r$.
For simplicity of notation, we will assume that $z\in C_{i}$. That is that until $z$ is clustered, \Cref{alg:coreClustering} in \lineref{line:PickCluster} always picks the cluster containing $z$. We can assume this as  otherwise the cluster $S_i$ can be ignored.
Let $\tilde{R}_{i}$ be the amount $R_i$ needs to grow so that $T_i$ will join $\mathcal{J}_z$, that is $d_{G_i}(z,S_i)\le\alpha\cdot\Delta$. Formally 
\[
\tilde{R}_{i}=\begin{cases}
	\frac{d_{G_{i}}(z,T_{i})}{\Delta}-\alpha & \text{if }d_{G_{i}}(z,T_{i})>\alpha\cdot\Delta\\
	0 & \text{if }d_{G_{i}}(z,T_{i})\le\alpha\cdot\Delta
\end{cases}~.
\]
Let $h=\frac{1}{\Delta}\cdot d_{G_{i}}(z,T_{i})-\tilde{R}_{i}$,
be the amount that $R_{i}$ need to additionally grow so that $z$
will join $S_{i}$. 
Note that $h$ either equals $\alpha$ (if $d_{G_{i}}(z,T_{i})>\alpha\cdot\Delta$) or to $\frac{d_{G_{i}}(z,T_{i})}{\Delta}$. In any case, $h\le\alpha$.
See \Cref{fig:RtildeDef} (left) for an illustration.

%
%

\begin{figure}[]
	\centering{\includegraphics[scale=0.65]{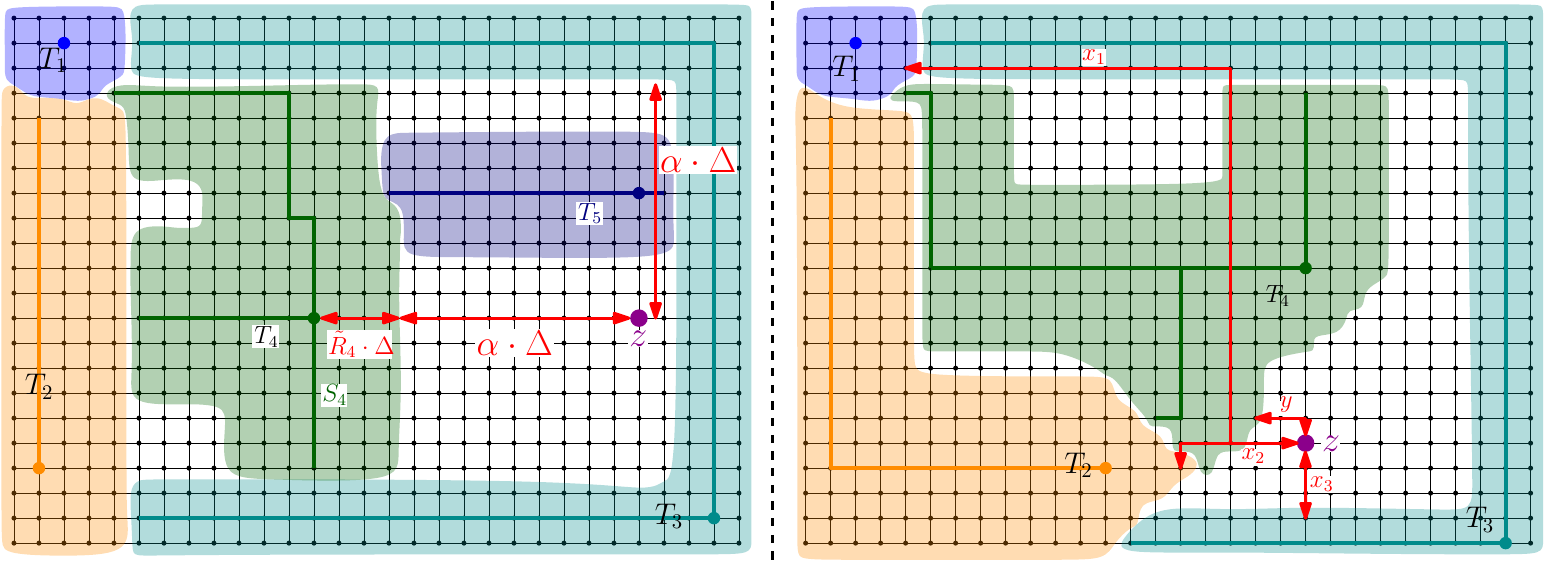}} 
	\caption{\label{fig:RtildeDef}\small \it 
		In both figures illustrated an unweighted graph, the clusters $S_1,S_2,S_3$ are colored in blue, orange and cyan respectively.
		At the forth step a new cluster is created with core $T_4$.\\
		On the left, $z$ is a vertex at distance greater than $\alpha\cdot\Delta$ from $T_4$. Hence $\tilde{R}_4$ is defined to be $\tilde{R}_i=\frac{d_{G_4}(z,T_4)}{\Delta}-\alpha$. 
		If $R_i$ is greater than $\tilde{R}_i$, then $T_i$ will join $\mathcal{J}_z$, and we will set $h=\alpha$. In the figure $R_4$ equals $\tilde{R}_4$.
		Next, the algorithm creates a cluster $S_5$ with core $T_5$, where $d_{G_5}(z,T_5)\le \alpha\cdot \Delta$.
		Hence $\tilde{R}_5$ is set to be $0$, $T_5$ joins $\mathcal{J}_z$, and $h$ set to be $\frac{1}{\Delta}\cdot d_{G_{5}}(z,T_{5})$.\\
		On the right, just before the creation of $S_4$ the vector $\boldsymbol{x}=(x_1,x_2,x_3)$ where $x_1=34$, $x_2=6$, and $x_3=3$ consist of the distances from $z$ to the clusters $S_1,S_2,S_3$ respectively. 
		The potential at this point is $\Phi(\boldsymbol{x})=e^{-3}+e^{-6}+e^{-34}$.
		After the creation of $S_4$ with $y=3$, there is no longer a path from $S_1$ to $z$, the distance from $S_2$ to $z$ increased from $6$ to $8$, and the distance from $S_3$ to $z$ remained unchanged. The new vector is  $\boldsymbol{x}'=(8,3,3)$, and the potential is $\Phi(\boldsymbol{x}')=e^{-3}+e^{-3}+e^{-8}$. 
	}
\end{figure}

If $R_{i}<\tilde{R}_{i}$, then no new cluster joins $\mathcal{J}_z$, and the potential remains unchanged. Thus nothing happened from our perspective. We will thus assume that  $R_{i}\ge\tilde{R}_{i}$. Let $\widehat{R}_{i}=R_{i}-\tilde{R}_{i}$. By the memoryless property, $\widehat{R}_{i}$ is distributed according to exponential distribution. Let $y=h-\widehat{R}_{i}$.
Then if $R_i<d_{G_i}(T_i,z)$ then $y=d_{G_i}(S_i,z)$, and otherwise $y$ has a negative value (and $z$ is clustered).
We next analyze how the vector $\boldsymbol{x}$ can change as a result of creating $S_i$.
For every $j$, if $x_{j}\le y$, then the shortest path from $z$ to $S_{i_{j}}$
is completely disjoint from $S_{i}$. In particular $x_{j}$ will
remain unaffected. Otherwise, if $x_{j}> y$ then some vertices in the shortest path from $S_{i_j}$ to $z$ might join the new cluster $S_i$. It follows that $x_j$ can 
increase, disappear or remain unchanged.
See \Cref{fig:RtildeDef} (right) for an illustration.

For a sequence of numbers $\boldsymbol{x}$ denote by $\boldsymbol{x}\downarrow y$
the sequence where we delete all the numbers larger than $y$, and
add $y$. Let $\boldsymbol{x}(i)$ be the sequence at time $i$.
We analyze the change in the potential function. The expectations
are bounded from below by $\mathbb{E}\left[\Phi(\boldsymbol{x}(i+1))-\Phi(\boldsymbol{x}(i))\right]\ge\mathbb{E}\left[\Phi(\boldsymbol{x}\downarrow (h-\widehat{R}_{i}))-\Phi(\boldsymbol{x})\right]$. 
The inequality holds as in $\boldsymbol{x}\downarrow (h-\widehat{R}_{i})$ we nullify all the coordinates smaller than $h-\widehat{R}_{i}$, while in $\boldsymbol{x}(i+1)$ some of them might remain unchanged, or only increased.
\begin{claim}\label{clm:expectedGain}
	For every vector $\boldsymbol{x}\in(0,\alpha]^{l}$ for $l\in[0,r]$,
	for every $h\in[0,\alpha]$ and $\rho$ distributed according to $\exp(1)$, it holds that that $\mathbb{E}\left[\Phi(\boldsymbol{x}\downarrow h-\rho)-\Phi(\boldsymbol{x})\right]\ge e^{-\alpha}\cdot(\alpha+r)$.
\end{claim}

\begin{proof}
	If $h=0$, then $\Phi(\boldsymbol{x}\downarrow h-\rho)-\Phi(\boldsymbol{x})\ge2r-r\ge e^{-\alpha}\cdot(\alpha+r)$. Thus we can assume that $h>0$. The gain to
	$\Phi$ is a single new addend of value $e^{-(h-\rho)}$ if $\rho<h$, or $2r$ in case $\rho\ge h$. Specifically:
	\begin{align*}
		\mathbb{E}\left[\mbox{gain}\right] & =\int_{0}^{h}e^{-\rho}\cdot e^{-(h-\rho)}d\rho+\int_{h}^{\infty}e^{-\rho}\cdot2rd\rho\\
		& =e^{-h}\int_{0}^{h}1d\rho+2r\cdot e^{-h}=e^{-h}\left(h+2r\right)~.
	\end{align*}	
	From the other hand, for the addend $x_{j}$, there is a loss of $e^{-x_{j}}$ if $h-\rho<x_{j}$, and no change otherwise. 
	If $h<x_{j}$, this happens with probability $1$ and the loss is $e^{-x_{j}}<e^{-h}$.
	Else, the expected loss is
	\[
	\Pr\left[h-\rho<x_{j}\right]\cdot e^{-x_{j}}=e^{-(h-x_{j})}\cdot e^{-x_{j}}=e^{-h}~.
	\]
Thus the total expected loss is bounded by $r\cdot e^{-h}$. In total
\[
\mathbb{E}\left[\Phi(\boldsymbol{x}\downarrow h-\rho)-\Phi(\boldsymbol{x})\right]\ge e^{-h}\left(h+2r-r\right)=e^{-h}\cdot\left(h+r\right)~.
\]
The function $e^{-h}\cdot(h+r)$ monotonically decreasing, and has minimum value at $h=\alpha$. The claim now follows.
\end{proof}

Set $\zeta=e^{-\alpha}\cdot(\alpha+r)$. Then it follows from the claim above,
that for every $i$, and $\boldsymbol{x}(i)$, it holds that $\mathbb{E}\left[\Phi(\boldsymbol{x}(i+1))-\Phi(\boldsymbol{x}(i))\mid\Phi(\boldsymbol{x}(i))\right]\ge\zeta$.
Let $X_{t}=\Phi(\boldsymbol{x}(t))-t\zeta$. It holds that
\begin{align*}
	\mathbb{E}\left[X_{t+1}\mid X_{1}\dots X_{t}\right] & =\mathbb{E}\left[\Phi(\boldsymbol{x}(t+1))-(t+1)\zeta\mid\boldsymbol{x}(t)\right]\\
	& \ge\Phi(\boldsymbol{x}(t))+\zeta-(t+1)\zeta=\Phi(\boldsymbol{x}(t))-t\zeta=\mathbb{E}\left[X_{t}\right]~,
\end{align*}
thus $X_{1},X_{2},\dots,$ is a sub-martingale. Recall that $\Phi(\boldsymbol{x})$
is always bounded by $2r$. Denote by $\tau=|\mathcal{J}_{z}|$
the time when the process halts. Using Doob's optional stopping time
theorem \cite{BW07book} it holds that 
\[
\mathbb{E}\left[\Phi(\boldsymbol{x}(\tau))\right]-\mathbb{E}[\tau]\cdot\zeta=\mathbb{E}\left[X_{\tau}\right]\ge\mathbb{E}\left[X_{0}\right]=0
\]
Implying $\mathbb{E}[\tau]\le\frac{1}{\zeta}\cdot\mathbb{E}\left[\Phi(\boldsymbol{x}(\tau))\right]=\frac{2r}{\zeta}=\frac{2r}{\alpha+r}\cdot e^{\alpha}$.
\end{proof}

Finally we are ready to bound the probability that $z$ joins to a cluster of too large 	radius.
Let $\Psi_{z}$ be the event that $z$ joins a cluster $S_{i}$,
such that at the time that $z$ joins, $R_{i}>\alpha$ (formally, $z\in S_i$ and $d_{G_i}(z,T_{i})>\alpha\Delta$). 
Set $\mathcal{T}_{z}=\left\{ T_{i}\mid d_{G_{i}}(z,T_{i})>\alpha\Delta\text{ and }d_{G_{i}}(z,S_{i})\le\alpha\Delta\right\} $.
Denote by $\mathcal{C}_{i}$ the event that $z\in S_{i}$ and $T_{i}\in\mathcal{T}_{z}.$
Recall that $\mathcal{J}_{z}=\left\{ T_{i}\mid d_{G_{i}}(z,S_{i})\le\alpha\Delta\right\} $,
thus $\mathcal{T}_{z}\subseteq\mathcal{J}_{z}$. Hence by \Cref{lem:ExpectedThretaners},
$\mathbb{E}\left[\left|\mathcal{T}_{z}\right|\right]\le\mathbb{E}\left[\left|\mathcal{J}_{z}\right|\right]\le\frac{2r}{\alpha+r}\cdot e^{\alpha}$. We conclude,
\begin{align*}
	\Pr\left[\Psi_{z}\right] & =\Pr\left[\cup_{i}\mathcal{C}_{i}\right]=\sum_{i}\Pr\left[\mathcal{C}_{i}\right]=\sum_{i}\Pr\left[z\in S_{i}\wedge T_{i}\in\mathcal{T}_{z}\right]\\
	& =\sum_{i}\Pr\left[z\in S_{i}\mid T_{i}\in\mathcal{T}_{z}\right]\cdot\Pr\left[T_{i}\in\mathcal{T}_{z}\right]\\
	& \overset{(*)}{=}\sum_{i}e^{-\alpha}\cdot\Pr\left[T_{i}\in\mathcal{T}_{z}\right]\\
	& =e^{-\alpha}\cdot\mathbb{E}\left[\left|\mathcal{T}_{z}\right|\right]\le e^{-\alpha}\cdot\frac{2r}{\alpha+r}\cdot e^{\alpha}=\frac{2r}{\alpha+r}=\frac{1}{4}~,
\end{align*}
where in the equality $(*)$ we used the memoryless property of exponential
distribution.

Let $\varUpsilon_{z}$ be the event that $z$ joined a cluster $S_{i}$,
such that after the time that $z$ joins, $R_{i}$ increases by additional $\alpha$ factor  (formally, $z\in S_i$ and $R_i\cdot\Delta-d_{G_i}(z,T_{i})>\alpha\Delta$). Then by the memoryless property of exponential distribution,
$\Pr[\varUpsilon_{z}]=e^{-\alpha}<\frac{1}{4}$.
Denote by $\Phi_{z}$ the event that $z$ belong to a cluster with
radius greater than $2\alpha\Delta$. 
Clearly $\Phi_{z}\subseteq \Psi_z\cup\varUpsilon_{z}$, as if both $\Psi_z,\varUpsilon_{z}$ did not occur then $R_i<2\alpha$. By union bound it follows that $\Pr\left[\Phi_{z}\right]\le\Pr\left[\Psi_{z}\right]+\Pr\left[\varUpsilon_{z}\right]\le\frac{1}{4}+\frac{1}{4}=\frac{1}{2}$.
%

Next we prove the padding property. Let $\gamma\in(0,\frac18)$ be a padding parameter, and set $B=B_G(z,\gamma\Delta)$. 
First we argue that in a single partition executed using \Cref{alg:coreClustering}, the ball $B$ is fully contained in a single cluster with probability at least $e^{-2\gamma}$.
Indeed consider the first index $i$ such that $B\cap S_i\ne\emptyset$, and let $u$ be the closest vertex to $T_i$ (that is $u={\rm arg}\min_{v\in V}d_{G_i}(v,T_i)$).
As $B\subseteq G_i$, for every $v\in B$ it holds that $d_{G_i}(u,v)\le2\gamma\cdot\Delta$. Hence by the memoryless property of exponential distribution 
\begin{align*}
\Pr\left[B\subseteq P(z)\right] & =\Pr\left[B\subseteq S_{i}\mid u\in S_{i}\right]\\
& \ge\Pr\left[R_{i}\ge d_{G_{i}}(T_{i},u)+2\gamma\mid R_{i}\ge d_{G_{i}}(T_{i},u)\right]=\Pr\left[R_{i}\ge2\gamma\right]=e^{-2\gamma}~.
\end{align*}

Recall that our algorithm returns a partition $\tilde{\mathcal{S}}$, which constructed by first executing \Cref{alg:coreClustering} to obtain a partition $\mathcal{S}$, and then recursively partitioning each cluster $S_i\in \mathcal{S}$ with radius larger than $2\alpha\cdot\Delta$. Let $\tilde{P}(z)$ be the cluster containing $z$ in the final partition $\tilde{\mathcal{S}}$.
Denote by $\varGamma_{i}$ the event that $z$ participated in at least $i$ recursive calls, where in the first $i-1$ recursive calls the ball $B$ was contain in a single cluster, and in the $i$'th recursive call, the ball $B$ was cut. Then $\Pr[\varGamma_{i}]\le\frac{1}{2^{i-1}}\cdot(1-e^{-2\gamma})$, as for $\varGamma_{i}$ to occur, $z$ must be clustered $i-1$ times in a cluster with radius greater than $2\alpha\cdot\Delta$ (each occurring with probability at most $\frac12$), and in the $i$'th iteration the ball $B$ must be cut (which happens with probability at most $1-e^{-2\gamma}$). We conclude

%
\[
\Pr\left[B_{G}(z,\gamma\Delta)\nsubseteq\tilde{P}(z)\right]\le\sum_{i\ge1}\Pr\left[\varGamma_{i}\right]\le\sum_{i\ge1}\frac{1}{2^{i-1}}\cdot\left(1-e^{-2\gamma}\right)\le2\cdot2\gamma\le1-e^{-8\gamma}~.
\]

To conclude we constructed $r$-core partitions with radius $2\alpha\cdot\Delta$ such that for every vertex $z\in V$ and $\gamma\in(0,\frac18)$ it holds that $\Pr
\left[B_{G}(z,\gamma\Delta)\subseteq \tilde{P}(z)\right]\ge e^{-8\gamma}$. \Cref{lem:coreClustering} now follows by scaling.

\section{General Graphs}\label{sec:generalGraphs}
Bartal \cite{Bar96} showed that every $n$-point metric space is $O(\log n)$-decomposable. In particular, for large enough constant $c$, and $\gamma=\frac{1}{c\cdot k}$, a pair of vertices at distance $\gamma\Delta$ will be clustered together with probability $n^{-\frac1k}$. 
As the padding parameter governs the exponent in the success probability, it is important to optimize the constant $c$. 
Awerbuch and Peleg \cite{AP90} showed that for $k\in\N$, general $n$-vertex graphs admit a strong $(4k-2,O(k\cdot n^{\frac1k}))$ sparse cover scheme. Specifically, \cite{AP90} gave a deterministic construction of $O(k\cdot n^{\frac1k})$ partitions, all strongly $\Delta$-bounded and such that every vertex is $4k-2$-padded in one of these partitions. It follows that if one samples a single partition from \cite{AP90} uniformly at random, then every vertex is $4k-2$-padded with probability at least  $\Omega(\frac1k\cdot n^{-\frac1k})$.

In this section we attempt to optimize the ratio trade-off of ThProbabilistic partitions.
We will prove that every $n$-vertex weighted graph admits a strong $(2k,\frac18\cdot n^{-\frac{1}{k-1}})$-ThProbabilistic decomposition scheme (\Cref{thm:PartitionStrongMetrics}), and weak $(2k,n^{-\frac{1}{k}})$-ThProbabilistic decomposition scheme (\Cref{thm:PartitionWeakMetrics}).
Finally, we show that assuming Erd\H{o}s girth conjecture \cite{Erdos64}, for every (integer) $k\ge 1$, and (real) $t<2k+1$, if every $n$ point metric space admits a weak $(t,p)$-ThProbabilistic decomposition scheme, then $p=\tilde{O}(n^{-\frac{1}{k}})$.
It follows that the success probability in \Cref{thm:PartitionWeakMetrics} cannot be (substantially) improved.
However, note that it might be possible to improve the stretch parameter to $2k-1$ (instead of $2k$) while having the same success probability.
Closing this gap between $2k-1$ to $2k$, and constructing strong ThProbabilistic decomposition that will match the performance of the weak ThProbabilistic decompositions, are two intriguing open problems.

%

\subsection{Strong Diameter for General Graphs}

\begin{theorem}\label{thm:PartitionStrongMetrics}
	For every (real) $k\ge 1$, every $n$ point weighted graph $G=(V,E,w)$ admits a strong $(2k,\frac18\cdot n^{-\frac{1}{k-1}})$-ThProbabilistic decomposition scheme.
\end{theorem}
\begin{proof}
	Consider an $n$ point weighted graph $G=(V,E,w)$. We will show that $G$ admits a strong $(2k,2^{-\frac{k}{k-1}}\cdot n^{-\frac{1}{k-1}})$-ThProbabilistic decomposition scheme. For $k\ge\frac32$, $2^{-\frac{k}{k-1}}\ge\frac18$, so the theorem follows. For $k<\frac32$, $n^{-\frac{1}{k-1}}<n^{-2}$. Note that every graph admits a strong $(1,n^{-2})$-ThProbabilistic decomposition scheme.%
	\footnote{Consider the following random partition: pick u.a.r. a pair of vertices $u,v$ at distance at most $\Delta$. $\mathcal{P}$ will consist of a set containing the shortest path from $u$ to $v$, and all the remaining vertices will be in singleton clusters. Clearly this is a strong $(1,n^{-2},\Delta)$-ThProbabilistic decomposition.}
	Hence there is nothing to prove.
	
	By scaling, it is enough to construct a ThProbabilistic decomposition for $\Delta=2$.
	We will use a classic ball carving (ala \cite{Bar96}).
	Fix $\lambda=\frac{k}{k-1}\cdot\ln(2n)$, and let $\mathcal{D}$ be exponential distribution with parameter $\lambda$. The density function is $\lambda\cdot e^{-\lambda x}$ for $x\ge 0$. Note that by our choice of $\lambda$, it holds that 
	\begin{align}
		e^{-\frac{\lambda}{k}}=2n\cdot e^{-\lambda}~.\label{eq:LambdaChoise}
	\end{align}
	%
	We create a partition $\mathcal{P}$ as follows, initially $Y_1=V$ is all the unclustered vertices. At step $i$, after we created the clusters $C_1,C_2,\dots,C_{i-1}$, the unclustered vertices are $Y_i=V\setminus(\cup_{j<i}C_j)$. Pick an arbitrary vertex $x_i\in Y_i$, and a radius $r_i\sim\mathcal{D}$.  Set the new cluster to be $C_{i}=B_{G[Y_i]}(x_i,r_i)$ the ball of radius $r_i$ around $x_i$ in the graph induced by the unclustered vertices. The process halts once $Y_i=\emptyset$.
	
	
	Consider a pair $u,v$ such that $d_G(u,v)=\frac{1}{2k}\cdot\Delta=\frac1k$. 
	We first argue that $\Pr[P(x)=P(y)] \ge e^{-\frac\lambda k}$.
	Let $Q$ be some shortest path in $G$ from $x$ to $y$.
	Let $i$ be the first index such that $C_i\cap Q\ne\emptyset$.
	In particular, $Q\subseteq Y_i$ and hence $d_{G[Y_i]}(u,v)=d_{G}(u,v)$.
	Let $z\in Q$ be the vertex closest to $x_i$ (that is $z=\mbox{arg}\min_{s\in Q}d_{G[Y_i]}(x_i,s)$).
	Hence $d_{G[Y_i]}(x_i,z)\le r_i$. By triangle inequality, for every $s\in Q$, $d_{G[Y_i]}(x_i,s)-d_{G[Y_i]}(x_i,z)\le d_{G[Y_i]}(z,s)\le\frac1k$. Using the memoryless property of exponential distribution we conclude
	\begin{align*}
		& \Pr\left[P(u)=P(v)\mid Q\subseteq Y_{i}\text{ and }Q\cap C_{i}\ne\emptyset\right]\\
		& \qquad\ge\Pr_{r_{i}\sim\mathcal{D}}\left[r_{i}\ge\max\left\{ d_{G[Y_{i}]}(x_{i},v),d{}_{G[Y_{i}]}(x_{i},u)\right\} \mid r_{i}\ge d{}_{G[Y_{i}]}(x_{i},z)\right]\\
		& \qquad=\Pr_{r_{i}\sim\mathcal{D}}\left[r_{i}\ge\max\left\{ d_{G[Y_{i}]}(x_{i},v)-d_{G[Y_{i}]}(x_{i},z),d{}_{G[Y_{i}]}(x_{i},u)-d_{G[Y_{i}]}(x_{i},z)\right\} \right]\\
		& \qquad\ge\Pr_{r_{i}\sim\mathcal{D}}\left[r_{i}\ge\frac{1}{k}\right]=e^{-\frac{\lambda}{k}}~.
	\end{align*}
	
	If all the radii sampled in the process will be at most $1$, our partition will be strongly $2$-bounded. The probability of a single radius of being larger then $1$ is $e^{-\lambda}$. By union bound the probability that some radius is larger then $1$ is at most $n\cdot e^{-\lambda}$. Hence the probability that the partition is $2$ bounded is at least $1-n\cdot e^{-\lambda}$. We will condition the resulting partition to be $2$ bounded.
	Using the law of total probability, the probability that $P(u)=P(v)$ now is:
	\begin{align*}
		& \Pr[P(u)=P(v)\mid\mathcal{P}\text{ is }2\text{-bounded}]\\
		& \qquad=\frac{\Pr[P(u)=P(v)]}{\Pr[\mathcal{P}\text{ is }2\text{-bounded}]}-\frac{\Pr[\mathcal{P}\text{ is not }2\text{-bounded}]}{\Pr[\mathcal{P}\text{ is }2\text{-bounded}]}\cdot\Pr[P(u)=P(v)\mid\mathcal{P}\text{ is not }2\text{-bounded}]\\
		& \qquad\stackrel{(*)}{>}\Pr[P(u)=P(v)]-\Pr[\mathcal{P}\text{ is not }2\text{-bounded}]\\
		& \qquad\ge e^{-\frac{\lambda}{k}}-n\cdot e^{-\lambda}=\frac{1}{2}\cdot e^{-\frac{\lambda}{k}}=\frac{1}{2}\cdot(2n)^{-\frac{1}{k-1}}=2^{-\frac{k}{k-1}}\cdot n^{-\frac{1}{k-1}}~.
	\end{align*}	
	where inequality $^{(*)}$ follows as $\Pr[\mathcal{P}\text{ is }2\text{-bounded}]<1$ and ${\Pr[P(u)=P(v)\mid\mathcal{P}\text{ is not }2\text{-bounded}]\le1}$.
\end{proof}

\subsection{Weak diameter for General Metrics}
Here we present a different clustering scheme with improved parameters, however the diameter guarantee will be only weak. 
The algorithm here is ball carving as well, however instead of sampling i.i.d. random radii for arbitrarily chosen centers, we pick a single random radius, and the order of centers is chosen randomly. This approach was introduced by C{\u{a}}linescu, Karloff and Rabani \cite{CKR04} in the context of the $0$-extension problem, then Fakcharoenphol, Rao, and Talwar \cite{FRT04} used it to construct stochastic embedding into ultrametrics (alternatively: creating random hierarchical partitions). In particular, one can deduce from \cite{FRT04} analysis an $(O(\log n),O(1)))$-ThProbabilistic decomposition scheme.
This approach was also the first to construct weak padded decompositions for doubling metrics \cite{GKL03}.
Finally,  Blelloch, Gu, and Sun \cite{BGS17} used such clustering to construct Ramsey trees. Furthermore, one can deduce from \cite{BGS17} analysis an $(O(k),O(n^{-\frac1k})))$-ThProbabilistic decomposition scheme.
Our contribution here is tighter analysis, improving the stretch constant from $O(k)$ to $2k$.

\begin{theorem}\label{thm:PartitionWeakMetrics}
	For every (integer) $k\ge 1$, every $n$ point metric space $(X,d_X)$ admits a weak $(2k,n^{-\frac{1}{k}})$-ThProbabilistic decomposition scheme.
\end{theorem}
\begin{proof}	
	Pick u.a.r. a  radius $r\in\{\frac1k,\frac2k,\dots,\frac kk\}$, and a random permutation $\pi=\{x_1,x_2,\dots,x_n\}$ over the metric points. 
	Each point $x\in X$ joins the cluster of the first center w.r.t. $\pi$ at distance at most $r\cdot\frac\Delta2$ from $x$. Formally:
	$$C_i=B_X(x_i,r\cdot\frac{\Delta}{2})\setminus\cup_{j<i}B_X(x_j,r\cdot\frac{\Delta}{2})~.$$
	As a result we obtain a $\Delta$ bounded partition.
	Fix a pair of points $u,v$ such that $d_X(u,v)\le\frac{1}{2k}\cdot\Delta$. 
	Let $A_s=\{x\in X\mid d_X(x,\{u,v\})\le\frac{s}{k}\cdot\frac\Delta2\}$ be the set of points at distance at most $\frac{s}{k}\cdot\frac\Delta2$ from either $u$ or $v$. 
	Then $A_0=\{u,v\}$.
	Suppose that $r=\frac sk$, and let $x_i$ be the vertex with minimal index such that $d_X(\{u,v\},x_i)\le\frac sk\cdot\frac\Delta2$. Then $u$ and $v$ will not join the clusters $C_1,\dots,C_{i-1}$, and at least one of them will join $C_i$.
	Assume w.l.o.g. that $d_X(u,x_i)\le d_X(v,x_i)$, and suppose farther that $x_i\in A_{s-1}$. 
	By the triangle inequality it follows that $d_X(v,x_i)\le d_X(v,u)+d_X(u,x_i)\le\frac1k\cdot\frac\Delta2+\frac{s-1}{k}\cdot\frac\Delta2=\frac{s}{k}\cdot\frac\Delta2$. Hence both $u,v$ will join the cluster of $x_i$. Using the law of total probability we conclude
	\begin{align*}
		\Pr[P(u)=P(v)] & =\frac{1}{k}\cdot\sum_{s=1}^{k}\Pr[P(u)=P(v)\mid r=\frac{s}{k}]\\
		& \ge\frac{1}{k}\cdot\sum_{s=1}^{k}\frac{|A_{s-1}|}{|A_{s}|}\ge\left(\Pi_{s=1}^{k}\frac{|A_{s-1}|}{|A_{s}|}\right)^{\frac{1}{k}}=\left(\frac{|A_{0}|}{|A_{k}|}\right)^{\frac{1}{k}}\ge n^{-\frac{1}{k}}~,
	\end{align*}
	where the second inequality follows by the inequality of arithmetic and geometric means.
	
%

\end{proof}


\subsection{Lower Bound}

\begin{theorem}\label{thm:PartitionMetricsLB}
	Assuming Erd\H{o}s girth conjecture \cite{Erdos64}, for every (integer) $k\ge 1$, and (real) $t<2k+1$, if every $n$ point metric space admits a weak $(t,p)$-ThProbabilistic decomposition scheme, then $p=\tilde{O}(n^{-\frac{1}{k}})$.
\end{theorem}
\begin{proof}
	Given a metric space $(X,d_X)$, an $(a,b)$-\emph{gap-distance-oracle}, is a data structure that given a pair of points $x,y$
	returns yes if $d_X(x,y)\le a$, no if $d_X(x,y)> b$, and can return either yes or no if $d_X(x,y)\in(a,b]$.
	Thorup and Zwick \cite{TZ05} showed that 
	assuming Erd\H{o}s girth conjecture  \cite{Erdos64}, a $t$-distance oracle (a data structure that returns $t$-approximation of the distance), must have size $\Omega(n^{1+\frac1k})$. Implicitly, their proof implies also that a gap distance oracle requires $\Omega(n^{1+\frac1k})$ space. 
	Later, we will show how using ThProbabilistic decomposition one can construct a gap distance oracle, and conclude the theorem.
	\begin{claim}\label{clm:DOLB}[\cite{TZ05}, implicit]
		Assuming Erd\H{o}s girth conjecture, for every (real) $t<2k+1$, there is an $n$-vertex unweighted graph $G=(V,E)$ such that every $(1,t)$-gap-distance oracle has space $\Omega(n^{1+\frac1k})$.
	\end{claim}
	This claim follows implicitly from \cite{TZ05}. We provide a proof for the sake of completeness.
	\begin{proof}		
		The girth of a graph is the length of its shortest cycle. Erd\H{o}s girth conjecture \cite{Erdos64} states that there is a graph $G=(V,E)$ with girth $2k+2$ and $\Omega(n^{1+\frac1k})$ edges.\footnote{Often in the literature the conjecture is referred to as stating that there is a graph $G$ with girth $2k+1$ and $\Omega(n^{1+\frac1k})$ edges. However, as every graph contains a bipartite graph with at least half the edges, the conjecture implies a graph with girth at least $2k+2$.}
		Let $\mathcal{G}$ be all the subgraph of $G$. And consider two different subgraphs $G_1,G_2$. Then there is an edge $\{u,v\}$ that w.l.o.g. belongs to $G_1$ but not $G_2$. It follows that $d_{G_2}(u,v)\ge 2k+1$, as otherwise $G$ will contain a cycle with at most $2k+1$ edges.
		Let $\mathbb{D}_1,\mathbb{D}_2$ be $(1,t)$-gap distance oracles for $G_1,G_2$ respectively. Given the query $(u,v)$, $\mathbb{D}_1$ will return yes, while $\mathbb{D}_2$ will return no. In particular $\mathbb{D}_1\ne \mathbb{D}_2$. As there are $2^{|E|}$ different subgraphs, there are at least $2^{|E|}$ different gap distance oracles. 
		It follows that there is some graph $G'\in\mathcal{G}$, such that the space of every gap distance oracle $\mathbb{D}$ of $G'$, is at least $\log 2^{|E|}=|E|=\Omega(n^{1+\frac1k})$. 
	\end{proof}
	\begin{claim}\label{clm:DecomositionToDO}
		If an $n$-point metric space admits $(t,p,\Delta)$-ThProbabilistic decomposition for $t>1$, then it has a $(\frac\Delta t,\Delta)$-gap distance oracle with $\tilde{O}(n\cdot p)$ space.
	\end{claim}
	\begin{proof}
		Set $s=p^{-1}\cdot2\log n$, and let $\mathcal{P}_1,\mathcal{P}_2,\dots,\mathcal{P}_s$ be random partition drawn from the $(t,p,\Delta)$-ThProbabilistic decomposition.
		Note that all the clusters in all the partitions have diameter at most $\Delta$. Consider a pair $u,v$ such that $d_X(u,v)\le\frac\Delta t$.
		In every partition $\mathcal{P}_i$, $u$ and $v$ belong to the same cluster with probability at least $p$. The probability that in no partition $u$ and $v$  are clustered together is thus at most 
		$(1-p)^s<e^{-ps}=n^{-2}$. By union bound, as there are at most ${n\choose2}$ pairs at distance at most $\frac\Delta t$, with probability at least $\frac12$ every such pair is clustered together in some partition. We thus can assume that we drawn partitions with this property.
		
		For our gap distance oracle we will simply store all the $s$ partitions. Given a query $u,v$, we will return yes iff there is a cluster in some partition containing both $u,v$. Clearly, given a pair $u,v$ such that $d_X(u,v)\le\frac\Delta t$, we will return yes as we insured that there is a cluster in one of the partitions containing both $u,v$. Otherwise, if $d_X(u,v)>\Delta$ then as all the partitions are $\Delta$-bounded, no cluster will contain both $u,v$ and we will return no.
		
		Storing a single partition takes $\tilde{O}(n)$ space, and hence the total size of our gap distance oracle is $\tilde{O}(n\cdot s)=\tilde{O}(n\cdot p)$.
	\end{proof}
	Using \Cref{clm:DOLB} and \Cref{clm:DecomositionToDO}, \Cref{thm:PartitionMetricsLB} follows.
\end{proof}

\section{Applications}
Applying \Cref{obs:PadToProb} on \Cref{cor:PaddedDoubling} and \Cref{thm:StrongMinorFree} we conclude,
\begin{corollary}\label{cor:probabilistic}
	Let $G$ be a weighted graph and $\Delta>0$ some parameter. 
	\begin{itemize}
		\item If $G$ excludes $K_r$ as a minor, it admits a strong $(O(r),\Delta)$-probabilistic decomposition scheme. 		
		\item If $G$ has doubling dimension $\ddim$, it admits a strong $(O(\ddim),\Delta)$-probabilistic decomposition scheme.
	\end{itemize}
\end{corollary}

\subsection{Approximation for Unique Games on Minor Free Graphs}\label{subsec:UG}
In the \emph{Unique Games} problem we are give a graph $G=(V,E)$, an integer $k\ge 1$ and a set of permutations $\Pi=\{\pi_{uv}\}_{uv\in E}$ on $[k]$ satisfying $\pi_{uv}=\pi_{vu}^{-1}$ (each permutation $\pi_{uv}$ is a bijection from $[k]$ to $[k]$). 
Given an assignment $x:V\rightarrow[k]$, the edge  $uv\in E$ is satisfied if $\pi_{uv}(x(u))=x(v)$. 
The problem is to find an assignment that maximizes the number of satisfied edges.
The Unique Games Conjecture of Khot \cite{Kho02} postulates that it is NP-hard to distinguish whether a given
instance of unique games is almost satisfiable or almost unsatisfiable.
The unique games conjecture was thoroughly studied. The conjecture has numerous implications.  

Alev and Lau \cite{AL17} studied a special case of the unique games problem, where the graph $G$ is $K_r$ minor free. Given an instance $(G,\Pi)$ where the optimal assignment violates $\eps$-fraction of the edge constrains, Alev and Lau used an LP-based approach to efficiently find an assignment that  violates at most   $O(\sqrt{\eps}\cdot r)$-fraction.
Specifically, in the rounding step of their LP, they used strong diameter probabilistic decompositions with parameter $O(r^2)$. Using instead our decompositions from \Cref{cor:probabilistic} with parameter $O(r)$ we obtain a quadratic improvement in the dependence on $r$.

\begin{theorem}
	Consider an instance $(G,\Pi)$ of the unique games problem, where the graph $G$ is $K_r$ minor free. Suppose that the optimal assignment violates at most an  $\eps$-fraction of the edge constrains. There is an efficient algorithm that find an assignment that violates at most an $O(\sqrt{\eps\cdot r})$-fraction.
\end{theorem}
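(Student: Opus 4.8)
The plan is to invoke the LP rounding framework of Alev and Lau \cite{AL17} essentially verbatim, substituting their use of strong $(O(r^2),\Delta)$-separating decompositions with our improved $(O(r),\Delta)$-separating decompositions from \Cref{cor:separating}. First I would recall that Alev and Lau solve an SDP/LP relaxation of the unique games instance that assigns to each vertex $u$ a probability distribution $\mu_u$ over $[k]$ (or a vector labeling), with the LP value measuring the total ``discrepancy'' mass of edges. Standard LP duality/analysis shows that if the optimal integral assignment violates an $\eps$-fraction of constraints, then the LP objective is at most $\eps\cdot|E|$, so on average an edge contributes at most $\eps$ to the fractional cost.

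The rounding proceeds by building, for the graph metric $d_G$ induced on $V$ by the LP solution (where $d_G(u,v)$ is roughly the statistical/earthmover distance between $\mu_u$ and $\mu_v$ after applying $\pi_{uv}$), a hierarchy of strong-diameter separating decompositions at geometrically decreasing scales $\Delta$. Within each cluster of bounded strong diameter one can afford to fix a single consistent labeling (propagating along a spanning tree of the induced cluster graph — this is exactly where strong, as opposed to weak, diameter is needed, since the propagation must stay inside the cluster). The probability that an edge $uv$ is ``cut'' by the decomposition at the relevant scale is, by the separating property, at most $\beta\cdot d_G(u,v)/\Delta$; cut edges are the only ones that can be violated by the rounded assignment beyond those already violated fractionally. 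Summing the per-edge failure probability over scales and combining with the Cauchy--Schwarz / scale-balancing step in \cite{AL17} yields an expected violated fraction of $O(\beta\cdot\sqrt{\eps})$. With $\beta=O(r)$ from \Cref{cor:separating} in place of $\beta=O(r^2)$, this is $O(r\sqrt\eps)=O(\sqrt{\eps r^2})$; a more careful accounting in \cite{AL17} (where the $\sqrt{\cdot}$ absorbs one factor of $\beta$ via the choice of threshold for rounding) actually gives $O(\sqrt{\beta\cdot\eps})=O(\sqrt{r\eps})$, which is the claimed bound. Derandomization / efficiency follows since our decompositions are sampled in polynomial time (\Cref{cor:separating}).

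The main obstacle is not in the decomposition itself — that is supplied by \Cref{cor:separating} — but in verifying that the Alev--Lau rounding genuinely only needs the separating guarantee (bound on the probability of cutting an \emph{edge}) with strong diameter, and that their dependence on the padding parameter $\beta$ enters as $\sqrt\beta$ rather than $\beta$ in the final bound. Concretely, one must check that (i) the LP/SDP relaxation they use produces a metric on $V$ with the scaling property that makes $\Delta$-bounded clusters translate to small labeling error, and (ii) the union bound over hierarchy scales together with their thresholding argument contributes the extra $\sqrt\beta$ savings. Both points are exactly as in \cite{AL17}; the only change is the numerical value of $\beta$, so the proof is a citation plus the observation that $O(r)$ replaces $O(r^2)$, giving $O(\sqrt{r\cdot\eps})$ in place of $O(r\sqrt\eps)=O(\sqrt{r^2\cdot\eps})$.
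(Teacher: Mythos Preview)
Your proposal is correct and matches the paper's approach exactly: the paper's entire ``proof'' of this theorem is the observation that Alev and Lau's rounding uses a strong-diameter separating decomposition as a black box with the final bound being $O(\sqrt{\beta\eps})$, so plugging in $\beta=O(r)$ from \Cref{cor:separating} in place of their $\beta=O(r^2)$ immediately improves $O(r\sqrt{\eps})$ to $O(\sqrt{r\eps})$. Your extra speculation about the internal mechanics of the LP rounding is unnecessary (and the intermediate $O(\beta\sqrt{\eps})$ guess is off --- the arithmetic $r\sqrt{\eps}=\sqrt{r^2\eps}$ with $\beta=r^2$ already pins the dependence as $O(\sqrt{\beta\eps})$), but the core argument is the same citation-plus-substitution the paper gives.
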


\subsection{Spanner for Graphs with Moderate Doubling Dimension}\label{subsec:Spanner}
Given a weighted graph $G=(V,E,w)$, a weighted graph $H=(V,E_H,w_H)$ is a $t$-\emph{spanner} of $G$, if for every pair of vertices $v,u\in V$, $d_G(v,u)\le d_H(v,u) \le t\cdot d_X(v,u)$. 
If in addition $H$ is a subgraph of $G$ (that is $E_H\subseteq E$ and $w_H$ agrees with $w$ on $E_H$) then $H$ is a \emph{graph spanner}.
The factor $t$ is called the \emph{stretch} of the spanner. The number of edges $|E_H|$ is the {\em sparsity} of the spanner. The weight of $H$ is $w_H(H)=\sum_{e\in E_H}w_H(e)$ the sum of its edge weights. The \emph{lightness} of $H$ is $\frac{w_H(H)}{w(\mst(G))}$ the ratio between the weight of the spanner to the wight of the MST of $G$.
The tradeoff between stretch and sparsity/lightness of spanners had been the focus of an intensive research effort, and low stretch graph spanners are used in a plethora of applications.

There is an extensive study of spanners for doubling metrics. Recently, for an $n$-vertex graph with doubling dimension $\ddim$,  Borradaile, Le and Wulff-Nilsen \cite{BLW19} contrasted a graph spanner with $1+\eps$ stretch, $\eps^{-O(\ddim)}$ lightness and  $n\cdot\eps^{-O(\ddim)}$ sparsity (improving \cite{Smid09,Got15,FS20}). This result is also asymptotically tight.
Note that the dependency on $\ddim$ is exponential, which is unavoidable for small, $1+\eps$ stretch. 
In cases where $\ddim$ is moderately large (say $\sqrt{\log n}$), it might be preferable to accept larger stretch in order to obtain small nlightness.  

In a recent work, Filtser and Neiman \cite{FN22}, for every stretch parameter $t\ge 1$, constructed a spanner with stretch $O(t)$, lightness $O(2^{\frac{\ddim}{t}}\cdot t\cdot\log^2n)$ and $O(n\cdot 2^{\frac{\ddim}{t}}\cdot\log n\cdot \log t)$ edges.
However, this spanner was not a subgraph. Most applications require a graphic spanner. It is possible to transform \cite{FN22} into a graphic spanner, but the number of edges becomes unbounded.
The spanner construction of \cite{FN22} is based on a variant of probabilistic decompositions, where they used a weak-diameter version. If we replaced this with our strongly padded decompositions \Cref{cor:PaddedDoubling}, and plug this into Theorem 3 from \cite{FN22}, we obtain a spanner with the same stretch to lightness ratio, but also with an additional sparsity guarantee.
\begin{corollary}
	Let $G=(V,E,w)$ be an $n$ vertex graph, with doubling dimension $\ddim$ and aspect ratio $\Lambda=\frac{\max_{e\in E}w(e)}{\min_{e\in E}w(e)}$. Then for every parameter $t>1$ there is an graph-spanner of $G$ with stretch $O(t)$, lightness $O(2^{\frac{\ddim}{t}}\cdot t\cdot\log^2n)$ and $O(n\cdot 2^{\frac{\ddim}{t}}\cdot\log n\cdot \log \Lambda)$ edges.
\end{corollary}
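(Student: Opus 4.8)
The plan is to instantiate the spanner framework of Filtser and Neiman \cite{FN18} (their Theorem~3) with our \emph{strongly} padded decomposition in place of the weak‑diameter separating decomposition used there. Everything analytic in \cite{FN18} — the stretch bound, the lightness charging argument, and the $2^{\ddim/t}$ bookkeeping — is already carried out for the weak version, so the only ingredient to supply is a strong separating decomposition, and the only thing to re‑examine is the resulting bound on the number of edges. By \Cref{cor:PaddedDoubling} together with \Cref{lem:PadToSep}, for every scale $\Delta>0$ the graph $G$ admits an efficiently samplable strongly $(O(\ddim),\Omega(1),\Delta)$‑padded decomposition, which is in particular a strongly $(O(\ddim),\Delta)$‑separating decomposition; crucially, every cluster $C$ in its support induces a \emph{connected} subgraph $G[C]$ of (strong) diameter $O(\Delta)$. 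This connectivity is exactly what is missing in the weak‑diameter setting of \cite{FN18}: with a strong cluster one may add to the spanner a single shortest‑path tree of $G[C]$ rooted at the cluster centre, contributing only $|C|-1$ edges while realising every centre‑to‑vertex distance inside $C$ up to $O(\Delta)$.

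First I would fix the scales. Rescale so that $\min_e w(e)=1$, and let $\Delta_i=2^i$ for $i$ ranging over $O(\log\Lambda)$ geometrically spaced levels (this is the scale bookkeeping of \cite{FN18}). At level $i$ I would run the decomposition of \Cref{cor:PaddedDoubling} with diameter parameter $\Delta=\Theta(t\cdot 2^i)$. By the padding guarantee, for a fixed vertex $v$ the ball $B_G(v,2^i)=B_G(v,\Delta/t)$ is contained in its cluster with probability at least $e^{-O(\ddim)/t}=2^{-O(\ddim/t)}$. Sampling $m=O\!\left(2^{\ddim/t}\cdot\log n\right)$ independent such partitions per level makes this event hold for all $v$ in at least one partition with high probability; alternatively, exactly as in the proof of \Cref{thm:DdimCover}, one can use the constructive Lov\'asz Local Lemma (\Cref{lem:lovasz}) together with the local dependence of these events (via \Cref{lem:doubling_packing}) to guarantee it deterministically. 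Adding one shortest‑path tree of $G[C]$ per cluster $C$, over all $m$ partitions and all $O(\log\Lambda)$ levels, uses at most $m\cdot n\cdot O(\log\Lambda)=O\!\left(n\cdot 2^{\ddim/t}\cdot\log n\cdot\log\Lambda\right)$ edges, as claimed.

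For the stretch, given $u,v\in V$ with $d=d_G(u,v)$ let $i$ be minimal with $2^i\ge d$, so $2^i\in[d,2d)$. In the partition at level $i$ where the ball $B_G(u,2^i)$ is padded, we have $v\in B_G(u,2^i)\subseteq C$ for $C$ the cluster of $u$, and the concatenation of the tree path from $u$ to the centre of $C$ with the tree path from the centre to $v$ is a $u$–$v$ path in the spanner of length at most $2\cdot O(\Delta)=O(t\cdot 2^i)=O(t\cdot d)$; since all added edges are edges of $G$, the spanner is a subgraph of stretch $O(t)$. For the lightness I would invoke the charging argument of \cite{FN18} as a black box: it bounds the total weight of all cluster trees at all levels by $O\!\left(2^{\ddim/t}\cdot t\cdot\log^2 n\right)\cdot w(\mst(G))$, using only the separating property (which we have, with parameter $O(\ddim)$ — it enters their estimate only through the $2^{\ddim/t}$ and logarithmic factors, never linearly) together with the strong diameter of the clusters. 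The one point that requires care — and the main obstacle — is therefore not a new inequality but a careful check of the interface: that Theorem~3 of \cite{FN18} accesses the decomposition only through (i) the separating/padding probability bound and (ii) the ability to route inside a cluster via its centre, so that substituting our strong decomposition replaces their potentially unbounded per‑cluster edge cost by the $|C|-1$ bound above while leaving their stretch and lightness analyses intact, and that the single factor $2^{\ddim/t}$ — the number of samples needed per level to pad a $\Delta/t$‑ball — is accounted for consistently in both the sparsity and the lightness bounds.
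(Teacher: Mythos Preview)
Your proposal is correct and follows exactly the paper's approach: the paper simply states that replacing the weak-diameter decomposition in \cite{FN18} by the strong one from \Cref{cor:PaddedDoubling} and plugging into Theorem~3 of \cite{FN18} yields the result, and you have spelled out precisely this substitution along with the reason the strong diameter gives the new edge bound (one shortest-path tree per cluster contributes $|C|-1$ edges). Your write-up is in fact more detailed than the paper's own one-line justification, but the underlying argument is the same.
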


\subsection{Path Reporting Distance Oracles}\label{subsec:DO}
Given a weighted graph $G=(V,E,w)$, a  {\em distance oracle} is a data structure that supports distance queries between pairs $u,v \in V$. The distance oracle has stretch $t$, if for every query $\{u,v\}$, the estimated distance $\est(u,v)$ is within $d_G(u,v)$ and $t\cdot d_G(u,v)$.
The studied objects are stretch, size the query time.
An additional requirement that been recently studied \cite{EP16} is \emph{path reporting}: in addition to distance estimation, the distance oracle should also return a path of the promised length. In this case, we say that distance oracle has query time $q$, if answering a query when the reported path has $m$ edges, takes $q+O(m)$ time.

Path reporting distance oracles were studied for general graphs \cite{EP16,ENW16}.
For the special case of graphs excluding $K_r$ as a minor, Elkin, Neiman and Wulff-Nilsen \cite{ENW16} constructed a path reporting distance oracles with stretch $O(r^2)$, space $O(n\cdot\log \Lambda\cdot\log n)$ and query time $O(\log\log\Lambda)$, where $\Lambda=\frac{\max_{u,v}d_G(u,v)}{\min_{u,v}d_G(u,v)}$ is the aspect ratio.
For this construction they used the strongly padded decomposition of \cite{AGGNT19} (in fact strong sparse covers). 
Implicitly, given a graph $G$ that admits a strong $(\beta,s)$ sparse cover scheme, \cite{ENW16} constructs a path reporting distance oracle with stretch $\beta$, size $O(n\cdot s\cdot \log_\beta\Lambda)$ and query time $O(\log\log\Lambda)$.
Following similar arguments to  \cite{ENW16} (taking $O(\log n)$ independent copies and using union bound). our padded decompositions from \Cref{thm:StrongMinorFree} implies that every $K_r$ minor free graph admits a strong $(O(r),O(\log n))$ sparse cover scheme. We conclude: 
\begin{corollary}
	Given an $n$-vertex weighted graph $G=(V,E,w)$ which excludes $K_r$ as a minor, with  aspect ratio $\Lambda$ , there is a path reporting distance oracle with stretch $O(r)$, space $O(n\cdot\log_r \Lambda\cdot\log n)$ and query time $O(\log\log\Lambda)$.
\end{corollary}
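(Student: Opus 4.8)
The plan is to obtain the distance oracle as a direct composition of two ingredients already available: the strong $\left(O(r),\Omega(\frac1r)\right)$-padded decomposition scheme of \Cref{thm:StrongMinorFree}, and the black-box reduction of Elkin, Neiman and Wulff-Nilsen \cite{ENW16}, which turns any strong $(\beta,s)$-sparse cover scheme into a path reporting distance oracle of stretch $\beta$, space $O(n\cdot s\cdot\log_\beta\Lambda)$ and query time $O(\log\log\Lambda)$. So the real content is the intermediate claim that a $K_r$-free graph admits a strong $(O(r),O(\log n))$-sparse cover scheme; once that is in hand, plugging $\beta=O(r)$ and $s=O(\log n)$ into the reduction yields the stated stretch, space and query time.

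To build the sparse cover at a fixed scale $\Delta$, I would invoke \Cref{thm:StrongMinorFree} at scale $\Delta$ and take the padding radius $\gamma\Delta$ with $\gamma=c/r$ for a small enough constant $c$; this is legitimate because $\gamma\le\delta=\Omega(\frac1r)$. For any fixed vertex $v$, the decomposition guarantees $\Pr[B_G(v,\gamma\Delta)\subseteq P(v)]\ge e^{-O(r)\cdot\gamma}=e^{-O(1)}=:p_0$, a positive constant. Sample $m=\Theta(\log n)$ independent partitions from the scheme; the probability that a fixed $v$ fails to be padded in all of them is at most $(1-p_0)^m\le n^{-2}$ for an appropriate choice of the constant hidden in $m$, so by a union bound over the $n$ vertices, with probability at least $1-\frac1n>0$ every vertex is padded in at least one of the $m$ partitions. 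Resampling when this fails (expected $O(1)$ rounds, each polynomial time), fix such an outcome and let $\mathcal{C}$ be the union of all clusters appearing in the $m$ partitions. Then $\mathcal{C}$ is strongly $\Delta$-bounded (each partition is), every vertex lies in at most $m=O(\log n)$ clusters, and every ball $B_G(v,\gamma\Delta)=B_G(v,\Delta/O(r))$ is contained in some cluster; that is, $\mathcal{C}$ is a strong $(O(r),O(\log n),\Delta)$-sparse cover, and letting $\Delta$ range over all positive reals gives the $(O(r),O(\log n))$-sparse cover scheme.

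The only step requiring genuine care is this amplification: matching the padding radius coming out of \Cref{thm:StrongMinorFree} to the $\frac1\beta$-radius demanded by the sparse-cover definition, and pinning down the constant in $m=\Theta(\log n)$ so that the union bound over all $n$ vertices still leaves success probability $1-\frac1n$ (hence expected-polynomial construction time). Beyond that, the corollary follows by feeding $\beta=O(r)$, $s=O(\log n)$ into the \cite{ENW16} reduction, with $\log_{O(r)}\Lambda=\Theta(\log_r\Lambda)$, so no further obstacle arises.
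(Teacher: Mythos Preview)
Your proposal is correct and matches the paper's approach essentially verbatim: the paper likewise derives a strong $(O(r),O(\log n))$-sparse cover scheme from \Cref{thm:StrongMinorFree} by sampling $O(\log n)$ independent partitions and applying a union bound over the $n$ vertices, then plugs $\beta=O(r)$, $s=O(\log n)$ into the \cite{ENW16} reduction. Your writeup simply makes explicit the constants and the resampling step that the paper leaves to a footnote.
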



For the case of graphs with doubling dimension $\ddim$, we constructed the first strong sparse covers. Plugging our \Cref{thm:DdimCover} into the framework of \cite{ENW16}, we obtain the first path reporting distance oracle for doubling graphs. The only relevant previous distance oracle for doubling metrics is by Bartal \etal \cite{BGKLR11}. However, they focused on the $1+\eps$-stretch regime, where inherently the oracle size has exponential dependency on $\ddim$.  

\begin{corollary}
	Given an $n$-vertex weighted graph $G=(V,E,w)$ with doubling dimension $\ddim$  and aspect ratio $\Lambda$, for every parameter $t\ge \Omega(1)$, there is a path reporting distance oracle with stretch $O(t)$, space $O(n\cdot 2^{\nicefrac{\ddim}{t}}\cdot\ddim\cdot\log \Lambda)$ 
	\footnote{This is assuming $\Lambda>\log t$, otherwise simply using an arbitrary shortest path tree will provide a distance oracle with stretch $O(\log t)$.}
	and query time $O(\log\log\Lambda)$.
	
	In particular, there is a path reporting distance oracle with stretch $O(\ddim)$, space $O(n\cdot \ddim\cdot\log \Lambda)$ 	and query time $O(\log\log\Lambda)$.
\end{corollary}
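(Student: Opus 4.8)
The plan is to combine the sparse cover scheme of \Cref{thm:DdimCover} with the path reporting distance oracle framework of Elkin, Neiman and Wulff-Nilsen \cite{ENW16}. As recorded in the discussion above, \cite{ENW16} show (implicitly) that whenever a graph $G$ admits a strong $(\beta,s)$-sparse cover scheme, one can build a path reporting distance oracle for $G$ with stretch $\beta$, space $O(n\cdot s\cdot\log_\beta\Lambda)$ and query time $O(\log\log\Lambda)$. Unlike the $K_r$-free case, no union-bound trick is needed here: the cover produced by \Cref{thm:DdimCover} is already a bona fide sparse cover (its padding property holds for every vertex, being obtained from the Lov\'asz Local Lemma). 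So the first step is simply to invoke \Cref{thm:DdimCover} with the given parameter $t$, obtaining for $G$ a strong $\left(O(t),\,O(2^{\nicefrac{\ddim}{t}}\cdot\ddim\cdot\log t)\right)$-sparse cover scheme, and feed it into this black box with $\beta=O(t)$ and $s=O(2^{\nicefrac{\ddim}{t}}\cdot\ddim\cdot\log t)$.

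The stretch $O(t)$ and the query time $O(\log\log\Lambda)$ are then immediate, and it remains to simplify the space bound
\[
O\!\left(n\cdot s\cdot\log_\beta\Lambda\right)=O\!\left(n\cdot 2^{\nicefrac{\ddim}{t}}\cdot\ddim\cdot\log t\cdot\frac{\log\Lambda}{\log\beta}\right).
\]
This is the point where the extra $\log t$ factor coming from the sparsity of the cover gets absorbed: since $\beta=\Theta(t)$, once $t$ exceeds a suitable absolute constant we have $\log\beta\ge\frac12\log t$, so $\frac{\log t}{\log\beta}=O(1)$ and the space is $O(n\cdot 2^{\nicefrac{\ddim}{t}}\cdot\ddim\cdot\log\Lambda)$; when $t$ is a bounded constant the factor $\log t$ is itself $O(1)$ and is absorbed directly. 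The ``in particular'' statement follows by setting $t=\ddim$ (assuming $\ddim=\Omega(1)$, and otherwise simply taking $t$ to be the implicit constant), which yields stretch $O(\ddim)$ and space $O(n\cdot 2^{1}\cdot\ddim\cdot\log\Lambda)=O(n\cdot\ddim\cdot\log\Lambda)$.

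The only point requiring genuine care is the small-aspect-ratio regime, where the cancellation above is not enough because $\log_\beta\Lambda$ can drop below $1$ (there is not even one ``scale'' to handle). Concretely, when $\Lambda\le\log t$ we abandon the cover-based construction and instead take an arbitrary shortest path tree $T$ of $G$ rooted at some vertex $o$: for any $u,v$ we have $d_T(u,v)\le d_T(u,o)+d_T(o,v)=d_G(u,o)+d_G(o,v)\le 2\Lambda\cdot d_G(u,v)=O(\log t)\cdot d_G(u,v)$, so (augmenting $T$ with an $O(n)$-space, $O(1)$-query LCA structure) $T$ itself is a path reporting oracle of stretch $O(\log t)\le O(t)$, with space $O(n)$ and query time well within the claimed bound. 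Since this is the only special case that enters, and everything else is a syntactic composition of \Cref{thm:DdimCover} with the result of \cite{ENW16}, I do not anticipate any real obstacle here — the substantive content of this corollary is carried entirely by \Cref{thm:DdimCover}.
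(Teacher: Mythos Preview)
Your proposal is correct and follows exactly the approach the paper takes: plug the sparse cover of \Cref{thm:DdimCover} into the \cite{ENW16} framework (stretch $\beta$, space $O(n\cdot s\cdot\log_\beta\Lambda)$, query $O(\log\log\Lambda)$), then cancel the $\log t$ from the sparsity against the $\log\beta=\Theta(\log t)$ in the denominator, handling the degenerate $\Lambda\le\log t$ regime via a shortest path tree as in the footnote. The paper states this only as a one-line ``plugging in'' without writing out the cancellation or the small-$\Lambda$ case, so your write-up is in fact more detailed than the original.
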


\section{Conclusion and Open Problems}
In this paper we closed the gap left in \cite{AGGNT19} between the padding parameters of strong and weak padded decompositions for minor free graphs.
Our second contribution is tight strong padded decomposition scheme for graphs with doubling dimension $\ddim$, which we also use to create sparse cover schemes.
Finally, we addressed the most basic question of strong and weak ThProbabilistic decompositions for general graphs, optimizing the stretch parameter.
Some open questions remain:
\begin{enumerate}
	\item Prove/disprove that $K_r$ minor free graphs admit strong/weak decompositions with padding parameter $O(\log r)$, as conjectured by \cite{AGGNT19}. 
	\item The question above is already open for the more restricted family of treewidth $r$ graphs.
	\item The $\delta$ parameter: \cite{AGGNT19} constructed weak $\left(O(r),\Omega(1)\right)$-padded decomposition scheme, while we constructed strong $\left(O(r),\Omega(\frac1r)\right)$-padded decomposition scheme. It will be nice to construct strong $\left(O(r),\Omega(1)\right)$-padded decomposition scheme. Such a decomposition will imply a richer spectrum of sparse covers (with $o(r)$ stretch).
	\item Sparse covers for $K_r$ minor free graphs: \cite{AGMW10} constructed strong $(O(r^2),2^r(r+1)!)$ sparse cover scheme, while 
	\cite{BLT14} constructed strong $(4,f(r)\cdot \log n)$ sparse cover scheme. An interesting open question is to create additional sparse cover schemes. Specifically, our padded decompositions suggest that an $(O(r),g(r))$-strong sparse cover scheme for some function $g$ independent of $n$, should be possible. Currently it is unclear how to construct such a cover. Optimally, we would like to construct $(O(1),g(r))$-sparse cover scheme.
	\item Close the gap between the lower and upper bounds of weak ThProbabilistic decompositions for general metrics.
	Specifically show that general $n$-point metrics admit $(2k-1,n^{-\frac1k})$-ThProbabilistic decomposition scheme, or the impossibility of such a scheme.
	\item Close the gap between weak to strong stochastic decompositions for general graphs.
	Specifically, construct strong $(2k,n^{-\frac1k})$-stochastic decompositions scheme, or show the impossibility of such a scheme.
\end{enumerate}

\section{Acknowledgments}
The author would like to thank Ofer Neiman for helpful discussions.

{\small
	\bibliographystyle{alphaurlinit}
	\bibliography{SteinerBib}
}
\newpage
\appendix

\end{document}